\documentclass[12pt,cls,onecolumn]{IEEEtran}
\usepackage{graphicx,amsmath,amssymb,epsfig, amsfonts, cite, latexsym, cuted, multicol, multirow, subfigure, stfloats, array, tabularx}
\usepackage{subeqnarray}
\usepackage{color}
\usepackage{setspace}
\usepackage{anysize}

\begin{document}

\title{Opportunistic Downlink Interference Alignment for Multi-Cell MIMO Networks}
\author{\large Hyun Jong Yang, \emph{Member}, \emph{IEEE}, Won-Yong Shin, \emph{Senior Member}, \emph{IEEE}, \\ Bang Chul Jung, \emph{Senior Member}, \emph{IEEE}, Changho Suh, \emph{Member}, \emph{IEEE}, \\and
Arogyaswami Paulraj, \emph{Fellow}, \emph{IEEE} \\
\thanks{H. J. Yang is with the School of Electrical and Computer Engineering, UNIST, Ulsan 689-798, Republic of Korea (E-mail:
hjyang@unist.ac.kr).}
\thanks{W.-Y. Shin is with the Department of Computer Science and
Engineering, Dankook University, Yongin 448-701, Republic of Korea
(E-mail: wyshin@dankook.ac.kr).}
\thanks{B. C. Jung (corresponding author) is with the Department of Electronics Engineering, Chungnam National
University, Daejeon 305-764, Republic of Korea (E-mail:
bcjung@cnu.ac.kr).}
\thanks{C. Suh is with the Department of Electrical Engineering, KAIST, Daejeon 305-701, Republic of Korea (E-mail: chsuh@kaist.ac.kr).}
\thanks{A. Paulraj is with the Department of Electrical Engineering,
Stanford University, Stanford, CA 94305 (email:
apaulraj@stanford.edu). }
} \maketitle


\markboth{IEEE Transactions on Wireless Communications} {Yang {\em
et al.}: Opportunistic Downlink Interference Alignment for
Multi-Cell MIMO Networks}


\newtheorem{definition}{Definition}
\newtheorem{theorem}{Theorem}
\newtheorem{lemma}{Lemma}
\newtheorem{example}{Example}
\newtheorem{corollary}{Corollary}
\newtheorem{proposition}{Proposition}
\newtheorem{conjecture}{Conjecture}
\newtheorem{remark}{Remark}

\def \diag{\operatornamewithlimits{diag}}
\def \min{\operatornamewithlimits{min}}
\def \max{\operatornamewithlimits{max}}
\def \log{\operatorname{log}}
\def \max{\operatorname{max}}
\def \rank{\operatorname{rank}}
\def \out{\operatorname{out}}
\def \exp{\operatorname{exp}}
\def \arg{\operatorname{arg}}
\def \E{\operatorname{E}}
\def \tr{\operatorname{tr}}
\def \SNR{\operatorname{SNR}}
\def \dB{\operatorname{dB}}
\def \ln{\operatorname{ln}}

\def \be {\begin{eqnarray}}
\def \ee {\end{eqnarray}}
\def \ben {\begin{eqnarray*}}
\def \een {\end{eqnarray*}}

\begin{abstract}
In this paper, we propose an opportunistic downlink interference
alignment (ODIA) for interference-limited cellular downlink, which
intelligently combines user scheduling and downlink IA techniques.
The proposed ODIA not only efficiently reduces the effect of
inter-cell interference from other-cell base stations (BSs) but
also eliminates intra-cell interference among spatial streams in
the same cell. We show that the minimum number of users required
to achieve a target degrees-of-freedom (DoF) can be fundamentally
reduced, i.e., the fundamental user scaling law can be improved by
using the ODIA, compared with the existing downlink IA schemes. In
addition, we adopt a limited feedback strategy in the ODIA
framework, and then analyze the number of feedback bits required
for the system with limited feedback to achieve the same user
scaling law of the ODIA as the system with perfect CSI. We also
modify the original ODIA in order to further improve sum-rate,
which achieves the optimal multiuser diversity gain, i.e., $\log
\log N$, per spatial stream even in the presence of downlink
inter-cell interference, where $N$ denotes the number of users in
a cell. Simulation results show that the ODIA significantly
outperforms existing interference management techniques in terms
of sum-rate in realistic cellular environments. Note that the ODIA
operates in a non-collaborative and decoupled manner, i.e., it
requires no information exchange among BSs and no iterative
beamformer optimization between BSs and users, thus leading to an
easier implementation.
\end{abstract}

\begin{keywords}
Inter-cell interference, interference alignment,
degrees-of-freedom (DoF), transmit \& receive beamforming, limited
feedback, multiuser diversity, user scheduling.
\end{keywords}

\newpage


\section{Introduction}
Interference management has been taken into account as one of the
most challenging issues to increase the throughput of cellular
networks serving multiple users. In multiuser cellular
environments, each receiver may suffer from intra-cell and
inter-cell interference.
Interference alignment (IA) was proposed by fundamentally solving
the interference problem when there are multiple communication
pairs~\cite{V_Cadambe08_TIT}. It was shown that the IA scheme can
achieve the optimal degrees-of-freedom (DoF)\footnote{It is
referred that `optimal' DoF is achievable if the outer-bound on
DoF for given network configuration is achievable. } in the
multiuser interference channel with time-varying channel
coefficients. Subsequent studies have shown that the IA is also
useful and indeed achieves the optimal DoF in various wireless
multiuser network setups: multiple-input multiple-output (MIMO)
interference channels~\cite{K_Gomadam11_TIT, T_Gou10_TIT} and
cellular networks~\cite{C_Suh11_TC,C_Suh08_Allerton}. In
particular, IA techniques~\cite{C_Suh11_TC,C_Suh08_Allerton} for
cellular uplink and downlink networks, also known as the
interfering multiple-access channel (IMAC) or interfering
broadcast channel (IBC), respectively, have received much
attention.
 The existing IA framework for cellular networks, however, still has
several practical challenges: the scheme proposed
in~\cite{C_Suh08_Allerton} requires arbitrarily large
frequency/time-domain dimension extension, and the scheme proposed
in~\cite{C_Suh11_TC} is based on iterative optimization of
processing matrices and cannot be optimally extended to an
arbitrary downlink cellular network in terms of achievable DoF.
%

In the literature, there are some results on the usefulness of
fading in single-cell downlink broadcast channels, where one can
obtain multiuser diversity gain along with user scheduling as the
number of users is sufficiently large: opportunistic
scheduling~\cite{R_Knopp95_ICC}, opportunistic
beamforming~\cite{P_Viswanath02_TIT}, and random
beamforming~\cite{M_Sharif05_TIT}. Scenarios exploiting multiuser
diversity gain have been studied also in ad hoc
networks~\cite{W_Shin14_TIT}, cognitive radio
networks~\cite{T_Ban09_TWC}, and cellular
networks~\cite{W_Shin12_TC}.

Recently, the concept of opportunistic IA~(OIA) was introduced
in~\cite{B_Jung11_CL,B_Jung11_TC,H_Yang13_TWC} for the $K$-cell
uplink network (i,e., IMAC model), where there are one $M$-antenna
base station (BS) and $N$ users in each cell. The OIA scheme
incorporates user scheduling into the classical IA framework by
opportunistically selecting $S$ ($S\le M$) users amongst the $N$
users in each cell in the sense that inter-cell interference is
aligned at a pre-defined interference space. It was shown
in~\cite{B_Jung11_TC,H_Yang13_TWC} that one can asymptotically
achieve the optimal DoF if the number of users in a cell scales as
a certain function of the signal-to-noise-ratio (SNR). For the
$K$-cell downlink network (i.e., IBC model) assuming one
$M$-antenna base station (BS) and $N$ per-cell users, studies on
the OIA have been conducted in~\cite{W_Shin12_IEICE,
J_Jose12_Allerton, J_Lee13_TWC, H_Nguyen13_TSP,H_Nguyen13_arXiv,
J_Lee13_arXiv}. More specifically, the user scaling condition for
obtaining the optimal DoF was characterized for the $K$-cell
multiple-input single-output (MISO) IBC~\cite{W_Shin12_IEICE}, and
then such an analysis of the DoF achievability was extended to the
$K$-cell MIMO IBC with $L$ receive antennas at each
user~\cite{J_Jose12_Allerton, J_Lee13_TWC,
H_Nguyen13_TSP,H_Nguyen13_arXiv, J_Lee13_arXiv}---full DoF can be
achieved asymptotically, provided that $N$ scales faster than
${\mathsf{SNR}}^{KM-L}$, for the $K$-cell MIMO IBC using
OIA~\cite{H_Nguyen13_arXiv, J_Lee13_arXiv}.

In this paper, we propose an \textit{opportunistic downlink IA
(ODIA)} framework as a promising interference management technique
for $K$-cell downlink networks, where each cell consists of one BS
with $M$ antennas and $N$ users having $L$ antennas each.
The proposed ODIA jointly takes into account user scheduling and
downlink IA issues. In particular, inspired by the precoder design
in~\cite{C_Suh11_TC}, we use two cascaded beamforming matrices to
construct our precoder at each BS. To design the first transmit
beamforming matrix, we use a user-specific beamforming, which
conducts a linear zero-forcing (ZF) filtering and thus eliminates
intra-cell interference among spatial streams in the same cell. To
design the second transmit beamforming matrix, we use a
predetermined reference beamforming matrix, which plays the same
role of random beamforming for cellular
downlink~\cite{W_Shin12_IEICE, H_Nguyen13_arXiv, J_Lee13_arXiv}
and thus efficiently reduces the effect of inter-cell interference
from other-cell BSs. On the other hand, the receive beamforming
vector is designed at each user in the sense of minimizing the
total amount of received inter-cell interference using
\textit{local} channel state information (CSI) in a decentralized
manner. Each user feeds back both the effective channel vector and
the quantity of received inter-cell interference to its home-cell
BS. The user selection and transmit beamforming at the BSs and the
design of receive beamforming at the users are completely
decoupled. Hence, the ODIA operates in a non-collaborative manner
while requiring no information exchange among BSs and no iterative
optimization between transmitters and receivers, thereby resulting
in an easier implementation.


The main contribution of this paper is four-fold as follows.
\begin{itemize}
\item  We first show that the minimum number of users required to
achieve $S$ DoF ($S\le M$) can be fundamentally reduced to
$\mathsf{SNR}^{(K-1)S-L+1}$ by using the ODIA at the expense of
acquiring perfect CSI at the BSs from users, compared to the
existing downlink IA schemes requiring the user scaling law
$N=\omega(\mathsf{SNR}^{KS-L})$~\cite{H_Nguyen13_arXiv,
J_Lee13_arXiv},\footnote{$f(x) = \omega(g(x))$ implies that
$\lim_{x \rightarrow \infty} \frac{g(x)}{f(x)}=0$.} where $S$
denotes the number of spatial streams per cell. The interference
decaying rate with respect to $N$ for given SNR is also
characterized in regards to the derived user scaling law. \item We
introduce a limited feedback strategy in the ODIA framework, and
then analyze the required number of feedback bits leading to the
same DoF performance as that of the ODIA assuming perfect
feedback, which is given by $\omega\left( \log_2
\mathsf{SNR}\right)$. \item  We present a user scheduling method
for the ODIA to achieve optimal multiuser diversity gain, i.e.,
$\log\log N$ per stream even in the presence of downlink
inter-cell interference. \item To verify the ODIA schemes, we
perform numerical evaluation via computer simulations. Simulation
results show that the proposed ODIA significantly outperforms
existing interference management and user scheduling techniques in
terms of sum-rate in realistic cellular environments.
\end{itemize}

The remainder of this paper is organized as follows. Section
\ref{SEC:system} describes the system and channel models. Section
\ref{SEC:OIA} presents the overall procedure of the proposed ODIA.
In Section \ref{sec:achievability}, the DoF achievablility result
is shown. Section \ref{SEC:OIA_limited} presents the ODIA scheme
with limited feedback. In Section \ref{SEC:Threhold_ODIA}, the
achievability of the spectrally efficient ODIA leading to a better
sum-rate performance is characterized. Numerical results are shown
in Section \ref{SEC:Sim}. Section \ref{SEC:Conc} summarizes the
paper with some concluding remarks.


\section{System and Channel Models} \label{SEC:system}
We consider a $K$-cell MIMO IBC where each cell consists of a BS
with $M$ antennas and $N$ users with $L$ antennas each. The number
of selected users in each cell is denoted by $S (\le M)$. It is
assumed that each selected user receives a single spatial stream.
To consider nontrivial cases, we assume that $L < (K-1)S +1$,
because all inter-cell interference can be completely canceled at
the receivers (i.e., users) otherwise. Moreover, the number of
antennas at the users is in general limited due to the size of the
form factor, and hence it is more safe to assume that $L$ is
relatively small compared to $(K-1)S+1$. The channel matrix from
the $k$-th BS to the $j$-th user in the $i$-th cell is denoted by
$\mathbf{H}_{k}^{[i,j]}\in \mathbb{C}^{L \times M}$, where $i,k\in
\mathcal{K} \triangleq \{ 1, \ldots, K\}$ and $j \in \mathcal{N}
\triangleq \{1, \ldots, N\}$. Each element of
$\mathbf{H}_k^{[i,j]}$ is assumed to be independent and
identically distributed (i.i.d.) according to $\mathcal{CN}(0,1)$.
In addition, quasi-static frequency-flat fading is assumed, i.e.,
channel coefficients are constant during one transmission block
and change to new independent values for every transmission block.
The $j$-th user in the $i$-th cell can estimate the channels
$\mathbf{H}_{k}^{[i,j]}$, $k=1, \ldots, K$, using pilot signals
sent from all the BSs.

The received signal vector at the $j$-th user in the $i$-th cell
is expressed as:
\begin{align} \label{eq:received_y}
\mathbf{y}^{[i,j]} &= \sum_{k=1}^{K}\mathbf{H}_{k}^{[i,j]}
\mathbf{s}_k+  \mathbf{z}^{[i,j]},
\end{align}
where $\mathbf{s}_k\in\mathbb{C}^{M \times 1}$ is the transmit
signal vector at the $k$-th BS with unit average power, i.e., $E
\left\|\mathbf{s}_k\right\|^2 = 1$, and $\mathbf{z}^{[i,j]} \in
\mathbb{C}^{L \times 1}$ denotes  additive noise,  each element of
which is independent and identically distributed complex Gaussian
with zero mean and the variance of $N_0$. The average SNR is given
by $\mathsf{SNR} = {\mathbb{E}\left[\left\|
\mathbf{H}_i^{[i,j]}\mathbf{s}_i \right\|^2
\right]}/{\mathbb{E}\left[ \left\| \mathbf{z}^{[i,j]}\right\|^2
\right]} = {1}/{N_0}$. Thus, in what follows we shall use the
notation $N_0 = \mathsf{{SNR}^{-1}}$ for notational simplicity.

Figure \ref{fig:system_model} shows an example of the MIMO IBC
model, where $K=3$, $M=3$, $S=2$, $L=3$, and $N=2$. The details in
the figure will be described in the subsequent section.

\begin{figure*} \label{fig:system_model}
\begin{center}
  \includegraphics[width=0.87\textwidth]{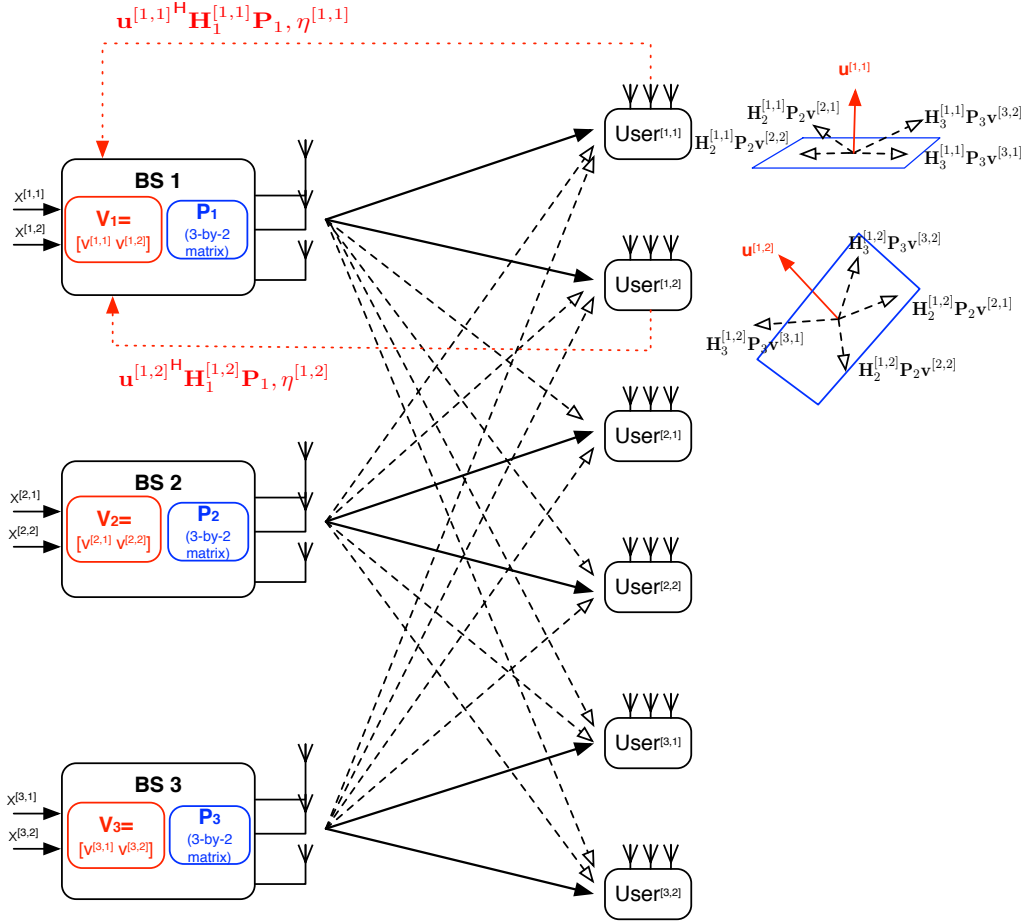}\\
  \caption{The MIMO IBC model, where $K=3$, $M=3$, $S=2$, $L=3$, and $N=2$.}\label{fig:system_model}
  \end{center}
\end{figure*}


\section{Proposed ODIA} \label{SEC:OIA}

 We first describe the overall procedure of our proposed ODIA scheme for the MIMO IBC, and then define its achievable sum-rate and DoF.
\subsection{Overall Procedure} \label{subsec:overall}

The ODIA scheme is described according to the following four
steps.
\subsubsection{Initialization (Broadcast of Reference Beamforming Matrices)}
The reference beamforming matrix at the BS in the $k$-th cell is
given by $\mathbf{P}_k = \left[ \mathbf{p}_{1,k}, \ldots,
\mathbf{p}_{S,k}\right]$, where $\mathbf{p}_{s,k} \in
\mathbb{C}^{M \times 1}$ is an \textcolor{black}{orthonormal
vector} for $k\in \mathcal{K}$ and $s =1, \ldots, S$.
\textcolor{black}{That is, $\mathbf{P}_k$ is an orthonormal basis
for an $S$-dimensional subspace of $\mathbb{C}^{M \times M}$.}
\textcolor{black}{Each BS randomly generates $\mathbf{P}_{k}$
independently of the other BSs.} If the reference beamforming
matrix is generated in a pseudo-random fashion, i.e., it changes
based on a certain pattern as if it changes randomly and the
pattern is known by the BSs as well as the users, BSs do not need
to broadcast them to users.
 Then, the $j$-th user in the $i$-th cell obtains $\mathbf{H}^{[i,j]}_{k}$ and $\mathbf{P}_k$, $k=1, \ldots, K$.
%

\subsubsection{Receive Beamforming  \& Scheduling Metric Feedback}
In the second step, we explain how to decide a user scheduling
metric at each user along with given receive beamforming, where
the design of receive beamforming will be explained in
Section~\ref{sec:achievability}. Let $\mathbf{u}^{[i,j]} \in
\mathbb{C}^{L \times 1}$ denote the unit-norm weight vector at the
$j$-th user in the $i$-th cell, i.e., $\left\| \mathbf{u}^{[i,j]}
\right\|^2 = 1$. Note that the user-specific beamforming
$\mathbf{V}_k$ will be utilized only to cancel intra-cell
interference out, and the inter-cell interference will be
suppressed from user scheduling, which will be specified later.
Thus, from the notion of $\mathbf{P}_k$ and
$\mathbf{H}^{[i,j]}_{k}$, the $j$-th user in the $i$-th cell can
compute the following quantity while using its receive beamforming
vector  $\mathbf{u}^{[i,j]}$, which is given by
\begin{align}\label{eq:eta_tilde}
\tilde{\eta}^{[i,j]}_{k} &= \left\|
{\mathbf{u}^{[i,j]}}^{\mathsf{H}}\mathbf{H}_{k}^{[i,j]}
\mathbf{P}_k \right\|^2,
\end{align}
where $i\in \mathcal{K}$, $j \in \mathcal{N} $, and $k\in
\mathcal{K}\setminus i= \{1, \ldots, i-1, i+1, \ldots, K\}$. Using
(\ref{eq:eta_tilde}), the scheduling metric at the $j$-th user in
the $i$-th cell, denoted by $\eta^{[i,j]}$, is defined as the sum
of $\tilde{\eta}^{[i,j]}_{k}$. That is,
\begin{align} \label{eq:eta}
\eta^{[i,j]} &= \sum_{k=1, k\neq i}^{K} \tilde{\eta}^{[i,j]}_{k}.
\end{align}

As illustrated in Fig. \ref{fig:system_model}, each user feeds the
metric in (\ref{eq:eta}) back to its home-cell BS.
In addition to the scheduling metric in (\ref{eq:eta}), for each
BS to design  the user-specific beamforming $\mathbf{V}_k$, each
user needs to feed back the information of the following vector
\begin{equation} \label{eq:effective_CH}
\mathbf{f}_{i}^{[i,j]} \triangleq
\left({\mathbf{u}^{[i,j]}}^{\mathsf{H}} \mathbf{H}^{[i,j]}_i
\mathbf{P}_ i\right)^{\mathsf{H}}.
\end{equation}

\subsubsection{User Scheduling}
Upon receiving $N$ users' scheduling metrics in the serving cell,
each BS selects $S$ users having the metrics up to the $S$-th
smallest one. Without loss of generality, the indices of selected
users in every cell are assumed to be $(1, \ldots, S)$.
Although $\tilde{\eta}^{[i,j]}_{k}$ is not exactly the amount of
the generating interference from the $k$-th BS to the $j$-th user
in the $i$-th cell due to the absence of $\mathbf{V}_k$, it
decouples the design of the user-specific precoding matrix
$\mathbf{V}_k$ from the user scheduling metric calculation, i.e.,
$\eta^{[i,j]}_{k}$ includes no information of $\mathbf{V}_k$. In
addition, we shall show in the sequel that the inter-cell
interference can be successfully suppressed by using the metric
$\eta^{[i,j]}_{k}$ even with $\mathbf{V}_k$ excluded and that the
optimal DoF can be achieved.

At this point, it is worthwhile to note that the role of
$\mathbf{P}_k$ is two-fold. First, it determines the dimension of
the effective received channel according to given parameter $S$.
By multiplying $\mathbf{P}_k$ to the channel matrix, the dimension
of the effective channel is reduced to $S$ rather than $M$, which
results in reduced number of inter-cell interference terms as well
as reduced average interference level for each interference term.
We shall show in the sequel that $\mathbf{P}_k$ plays a role in
the end of rendering the user scaling law dependent on the
parameter $S$.

Second, $\mathbf{P}_k$ separates the user scheduling procedure
from the user-specific precoding matrix design of $\mathbf{V}_k$
and also from the receiver beamforming vector design of
$\mathbf{u}_k$. By employing the cascaded precoding matrix design,
the scheduling metric in (1) becomes independent of $\mathbf{V}_k$
or $\mathbf{u}_k$, and $\mathbf{u}_k$ can be obtained as a
function of only $\mathbf{H}_k^{[i,j]}$ and $\mathbf{P}_k$ as
shown in (\ref{eq:u_design}).

The reason why $\mathbf{P}_k$ is designed to change in a
pseudo-random fashion is to increase the fairness of the users
scheduling by randomizing the scheduling metric of each user, but
can also be fixed if the fairness is not a matter or the channel
changes fast enough. In addition, if one wants to further improve
the achievable rate, $\mathbf{P}_k$ may be channel-specifically
designed combined with the user scheduling, which however results
in a collaborative and iterative user scheduling and precoding
matrix design.

In this and subsequent sections, we focus on how to simply design
a user scheduling method to guarantee the optimal DoF. An enhanced
scheduling algorithm jointly taking into account the vector to be
fed back in (\ref{eq:effective_CH}) and the scheduling metric  in
(\ref{eq:eta}) may provide a better performance in terms of
sum-rate, which shall be discussed in Section
\ref{SEC:Threhold_ODIA}.



\subsubsection{Transmit Beamforming \& Downlink Data Transmission}
As illustrated in Fig.~\ref{fig:system_model}, the precoding
matrix at each BS is composed of the product of the predetermined
reference beamforming matrix $\mathbf{P}_k$ and the user-specific
precoding matrix $\mathbf{V}_i = \left[ \mathbf{v}^{[i,1]},
\ldots, \mathbf{v}^{[i,S]}\right]$, where $\mathbf{v}^{[i,s]} \in
\mathbb{C}^{S \times 1}$, $i\in \mathcal{K}$. Let us denote the
transmit symbol at the $i$-th BS transmitted to the $j$-th user by
$x^{[i,j]}$, where  $E\left| x^{[i,s]} \right|^2=1/S$ for $s=1,
\ldots, S$. Denoting the transmit symbol vector by $\mathbf{x}_i =
\left[ x^{[i,1]}, \ldots, x^{[i,S]}\right]^T$, the transmit signal
vector at the $i$-th BS is given by $\mathbf{s}_i = \mathbf{P}_i
\mathbf{V}_i \mathbf{x}_i$, and the received signal vector at the
$j$-th user in the $i$-th cell is written as
\begin{align} \label{eq:rec_vector}
\mathbf{y}^{[i,j]} &= \mathbf{H}_i^{[i,j]}\mathbf{P}_i \mathbf{V}_i \mathbf{x}_i + \sum_{k=1, k\neq i}^{K} \mathbf{H}_k^{[i,j]}\mathbf{P}_k \mathbf{V}_k \mathbf{x}_k + \mathbf{z}^{[i,j]}  \nonumber \\
&= \underbrace{\mathbf{H}_i^{[i,j]}\mathbf{P}_i \mathbf{v}^{[i,j]}
x^{[i,j]}}_{\textsf{desired signal}} +  \underbrace{\sum_{s=1,
s\neq j}^{S} \mathbf{H}_i^{[i,j]}\mathbf{P}_i \mathbf{v}^{[i,s]}
x^{[i,s]}}_{\textsf{intra-cell interference}} \nonumber \\ & +
\underbrace{\sum_{k=1, k\neq i}^{K}
\mathbf{H}_k^{[i,j]}\mathbf{P}_k \mathbf{V}_k
\mathbf{x}_k}_{\textsf{inter-cell interference}} +
\mathbf{z}^{[i,j]}.
\end{align}
  The received signal vector after receive beamforming, denoted by $\tilde{y}^{[i,j]} = {\mathbf{u}^{[i,j]}}^{\mathsf{H}} \mathbf{y}^{[i,j]}$, can be rewritten as:
\begin{align}\label{eq:rec_vector_after_BF}
\tilde{y}^{[i,j]}
&= {\mathbf{f}_{i}^{[i,j]}}^{\mathsf{H}} \mathbf{v}^{[i,j]}
x^{[i,j]}  +{\mathbf{f}_{i}^{[i,j]}}^{\mathsf{H}}\sum_{s=1, s\neq
j}^{S} \mathbf{v}^{[i,s]} x^{[i,s]} \nonumber \\  &+ \sum_{k=1,
k\neq i}^{K} {\mathbf{f}_{k}^{[i,j]}}^{\mathsf{H}} \mathbf{V}_k
\mathbf{x}_k +
{\mathbf{u}^{[i,j]}}^{\mathsf{H}}\mathbf{z}^{[i,j]},
\end{align}
where
${\mathbf{f}_{k}^{[i,j]}}^{\mathsf{H}}={\mathbf{u}^{[i,j]}}^{\mathsf{H}}\mathbf{H}_k^{[i,j]}\mathbf{P}_k$.
By selecting users with small $\eta^{[i,j]}$ in (\ref{eq:eta}),
$\mathbf{H}_k^{[i,j]}\mathbf{P}_k$ tends to be orthogonal to the
receive beamforming vector $\mathbf{u}^{[i,j]}$; thus, inter-cell
interference channel matrices
$\mathbf{H}_k^{[i,j]}\mathbf{P}_k\mathbf{V}_k$ in
(\ref{eq:rec_vector_after_BF}) also tend to be orthogonal to
$\mathbf{u}^{[i,j]}$ as illustrated in Fig.
\ref{fig:system_model}.

To cancel out intra-cell interference, the user-specific
beamforming matrix $\mathbf{V}_i \in \mathbb{C}^{S \times S}$is
given by
\begin{align} \label{eq:ZF_BF}
\mathbf{V}_i &= [\mathbf{v}^{[i,1]},\mathbf{v}^{[i,2]}, \ldots,
\mathbf{v}^{[i,S]}] \nonumber \\ &= \begin{bmatrix}
       {\mathbf{u}^{[i,1]}}^{\mathsf{H}} \mathbf{H}_i^{[i,1]} \mathbf{P}_i  \\
       {\mathbf{u}^{[i,2]}}^{\mathsf{H}} \mathbf{H}_i^{[i,2]} \mathbf{P}_i  \\
       \vdots \\
       {\mathbf{u}^{[i,S]}}^{\mathsf{H}} \mathbf{H}_i^{[i,S]} \mathbf{P}_i
     \end{bmatrix}^{-1}   \cdot   \begin{bmatrix}
       \sqrt{\gamma^{[i,1]}} & 0 & \cdots & 0  \\
       0 & \sqrt{\gamma^{[i,2]}} & \cdots & 0  \\
       \vdots & \vdots & \ddots & \vdots \\
       0 & 0 & \cdots & \sqrt{\gamma^{[i,S]}}  \\
     \end{bmatrix},
\end{align}
\label{line:gamma:start}where $\sqrt{\gamma^{[i,j]}}$ denotes a
normalization factor for satisfying the unit-transmit power
constraint for each spatial stream, i.e., $\gamma^{[i,j]} =
1/\left\| \mathbf{P}_i \mathbf{v}^{[i,j]}
\right\|$.\label{line:gamma:end} In consequence, the received
signal can be simplified to
\begin{align}\label{eq:rec_vector_ZF_BF}
\tilde{y}^{[i,j]} &= \sqrt{\gamma^{[i,j]}} x^{[i,j]} +
\underbrace{\sum_{k=1, k\neq i}^{K}
{\mathbf{f}_{k}^{[i,j]}}^{\mathsf{H}}\mathbf{V}_k
\mathbf{x}_k}_{\textsf{inter-cell interference}} +
{\mathbf{u}^{[i,j]}}^{\mathsf{H}}\mathbf{z}^{[i,j]},
\end{align}
which thus does not contain the intra-cell interference term.

As in
\cite{N_Jindal06_TIT,T_Yoo07_JSAC,J_Thukral09_ISIT,R_Krishnamachari10_ISIT,S_Pereira07_Asilomar,B_Jung11_TC},
we assume no loss in exchanging signaling messages such as
information of effective channels, scheduling metrics, and receive
beamforming vectors.

\subsection{Achievable Sum-Rate and DoF}\label{subsec:sum_rate}
From (\ref{eq:rec_vector_ZF_BF}), the achievable rate of the
$j$-th user in the $i$-th cell is given by
\begin{align} \label{eq:data_rate_single_user}
R^{[i,j]}&=\log_2 \left( 1+ \mathsf{SINR}^{[i,j]} \right)
\nonumber \\ &= \log_2 \left( 1+ \frac{ \gamma^{[i,j]} \cdot
|x^{[i,j]}|^2}{\left|{{\mathbf{u}^{[i,j]}}^{\mathsf{H}}}
\mathbf{z}^{[i,j]}\right|^2+\tilde{I}^{[i,j]}} \right) \nonumber \\
& =\log_2 \left( 1+ \frac{ \gamma^{[i,j]} }{\frac{S}{\mathsf{SNR}}
+ \sum_{k=1, k\neq i}^{K} \sum_{s=1}^{S} \left|
{\mathbf{f}_{k}^{[i,j]}}^{\mathsf{H}} \mathbf{v}^{[k,s]}\right|^2
} \right),
\end{align}
where $\tilde{I}^{[i,j]} \triangleq \sum_{k=1, k\neq i}^{K}
\left|{\mathbf{f}_{k}^{[i,j]}}^{\mathsf{H}} \mathbf{V}_k
\mathbf{x}_k\right|^2$.

Using (\ref{eq:data_rate_single_user}), the achievable total DoF
can be defined as \cite{S_Jafar08_TIT}
\begin{equation}\label{eq:sum_DoF}
\textrm{DoF} = \lim_{\textsf{SNR} \rightarrow \infty}
\frac{\sum_{i=1}^{K}\sum_{j=1}^{S}R^{[i,j]}}{\log \textsf{SNR}}.
\end{equation}

\section{DoF Achievability}\label{sec:achievability}
In this section, we characterize the DoF achievability in terms of
the user scaling law with the optimal receive beamforming
technique. To this end, we start with the receive beamforming
design that maximizes the achievable DoF. For given channel
instance, from (\ref{eq:data_rate_single_user}), each user can
attain the maximum DoF of 1 if and only if the interference
$\sum_{k=1, k\neq i}^{K} \sum_{s=1}^{S}
\Big|{\mathbf{f}_{k}^{[i,j]}}^{\mathsf{H}}\mathbf{v}^{[k,s]}\Big|^2
\cdot \mathsf{SNR}$ remains constant for increasing SNR. Note that
$R^{[i,j]}$ can be bounded as
\begin{align}
\!\!\! &R^{[i,j]} \!\! \ge \!\! \log_2 \!\! \left(\!\!1+ \frac{ \gamma^{[i,j]} }{ \frac{S}{\mathsf{SNR}}+ \sum_{k=1, k\neq i}^{K} \sum_{s=1}^{S} \left\| \mathbf{f}_{k}^{[i,j]}\right\|^2 \left\| \mathbf{v}^{[k,s]}\right\|^2 } \!\!\! \right) \label{eq:data_rate_single_user_bound} \\
& \ge \log_2 \left( 1+ \frac{ \gamma^{[i,j]} }{ \frac{S}{\mathsf{SNR}}+ \sum_{k\neq i}^{K} \sum_{s=1}^{S} \left\| \mathbf{f}_{k}^{[i,j]} \right\|^2 \left\| \mathbf{v}^{(\max)}_{i}\right\|^2 } \right) \label{eq:data_rate_single_user_bound2}\\
& = \log_2\left( \mathsf{SNR}\right) + \log_2
\left(\frac{1}{\mathsf{SNR}}+ \frac{ \frac{\gamma^{[i,j]}}{\left\|
\mathbf{v}^{(\max)}_{i}\right\|^2}}{ \frac{S}{\left\|
\mathbf{v}^{(\max)}_{i}\right\|^2}+ I^{[i,j]} } \right),
\label{eq:data_rate_single_user_bound3}
\end{align}
where $\mathbf{v}^{(\max)}_{i}$ in
(\ref{eq:data_rate_single_user_bound2}) is defined by
\begin{align} \label{eq:v_max_def}
\mathbf{v}^{(\max)}_{i} &= \arg \max\bigg\{ \left\|
\mathbf{v}^{[i',j']}\right\|^2: i'\in \mathcal{K}\setminus i,
j'\in \mathcal{S}\bigg\},
\end{align}
$\mathcal{S} \triangleq \{1, \ldots, S\}$, and $I^{[i,j]}$ in
(\ref{eq:data_rate_single_user_bound3}) is defined by
\begin{align}
I^{[i,j]} \triangleq \sum_{k=1, k\neq i}^{K} \sum_{s=1}^{S}\left\|
\mathbf{f}_{k}^{[i,j]}\right\|^2 \cdot \mathsf{SNR}.
\end{align}
Here, $\mathbf{v}_i^{(\max)}$ is fixed for given channel instance,
because $\mathbf{v}^{[i,j]}$ is determined by
$\mathbf{H}_i^{[i,j]}$, $j=1, \ldots, S$. Recalling that the
indices of the selected users are $(1, \ldots, S)$ for all cells,
we can expect the DoF of 1 for each user if and only if for some
$0 \le \epsilon< \infty$,
\begin{equation}
I^{[i,j]} < \epsilon, \hspace{10pt} \forall j \in \mathcal{S},
i\in \mathcal{K}.
\end{equation}


To maximize the achievable DoF, we aim to minimize the
sum-interference $\sum_{i=1}^{K} \sum_{j=1}^{S}I^{[i,j]}$ through
receive beamforming at the users. Since $I^{[i,j]} =
\sum_{s=1}^{S} \eta^{[i,j]} \mathsf{SNR}$, we have
\begin{equation} \label{eq:sum_interference_equiv}
\sum_{i=1}^{K} \sum_{j=1}^{S}I^{[i,j]} = \sum_{i=1}^{K}
\sum_{j=1}^{S}\sum_{s=1}^{S} \eta^{[i,j]} \mathsf{SNR} =
S\sum_{i=1}^{K} \sum_{j=1}^{S} \eta^{[i,j]}\mathsf{SNR}.
\end{equation}
The equation (\ref{eq:sum_interference_equiv}) implies that the
collection of distributed effort to minimize $\eta^{[i,j]}$ at the
users can reduce the sum of received interference. Therefore, each
user finds the beamforming vector that minimizes $\eta^{[i,j]}$
from
\begin{align} \label{eq:u_design}
\mathbf{u}^{[i,j]} &= \arg \min_{\mathbf{u}} \eta^{[i,j]} =  \arg
\min_{\mathbf{u}} \sum_{k=1, k\neq i}^{K} \left\|
\mathbf{u}^{\mathsf{H}}\mathbf{H}_{k}^{[i,j]} \mathbf{P}_k
\right\|^2.
\end{align}
Let us denote the augmented interference matrix by
\begin{align} \label{eq:G_def}
\mathbf{G}^{[i,j]} &\triangleq \Bigg[
\left(\mathbf{H}_{1}^{[i,j]}\mathbf{P}_1\right), \ldots,
\left(\mathbf{H}_{i-1}^{[i,j]}\mathbf{P}_{i-1}\right),
\left(\mathbf{H}_{i+1}^{[i,j]}\mathbf{P}_{i+1}\right), \nonumber
\\ & \quad \quad \quad \ldots,
\left(\mathbf{H}_{K}^{[i,j]}\mathbf{P}_{K}\right)\Bigg]^{\mathsf{H}}
\in \mathbb{C}^{(K-1)S \times L},
\end{align}
and the singular value decomposition of $\mathbf{G}^{[i,j]}$ by
\begin{equation} \label{eq:G_SVD}
\mathbf{G}^{[i,j]} =
\boldsymbol{\Omega}^{[i,j]}\boldsymbol{\Sigma}^{[i,j]}{\mathbf{Q}^{[i,j]}}^{\mathsf{H}},
\displaybreak[0]
\end{equation}
where $\boldsymbol{\Omega}^{[i,j]}\in \mathbb{C}^{(K-1)S\times L}$
and $\mathbf{Q}^{[i,j]}\in \mathbb{C}^{L\times L}$ consist of $L$
orthonormal columns, and $\boldsymbol{\Sigma}^{[i,j]} =
\textrm{diag}\left( \sigma^{[i,j]}_{1}, \ldots,
\sigma^{[i,j]}_{L}\right)$, where $\sigma^{[i,j]}_{1}\ge \cdots
\ge\sigma^{[i,j]}_{L}$. \pagebreak[0] Then, the optimal
$\mathbf{u}^{[i,j]}$ is determined as
\begin{equation} \label{eq:W_SVD}
\mathbf{u}^{[i,j]}= \mathbf{q}^{[i,j]}_{L},
\end{equation}
where $\mathbf{q}^{[i,j]}_{L}$ is the $L$-th column of
$\mathbf{Q}^{[i,j]}$. With this choice the scheduling metric is
simplified to
\begin{equation} \label{eq:LIF_beamforming_simple}
\eta^{[i,j]} = {\sigma^{[i,j]}_{L}}^2.
\end{equation}
Since each column of $\mathbf{P}_k$ is isotropically and
independently distributed, each element of the effective
interference channel matrix $\mathbf{G}^{[i,j]}$ is i.i.d. complex
Gaussian with zero mean and unit variance.

\begin{remark}\label{remark:decoupled}
In general, the conventional scheduling metric such as SNR or SINR
in the IBC is dependent on the precoding matrices at the
transmitters, which makes the joint optimization of the precoder
design and user scheduling difficult to be separated from each
other and implemented with feasible signaling overhead and low
complexity. The previous schemes \cite{Q_Shi11_TSP,
K_Gomadam11_TIT} for the IBC only consider the design of the
precoding matrices and receive filters without any consideration
of user scheduling.

With the cascaded precoding matrix design, however, the proposed
scheme decouples the user scheduling metric calculation and the
user-specific precoding matrix $\mathbf{V}_i$, as shown in
(\ref{eq:eta_tilde}). In addition, the receive beamforming vector
design can also be decoupled from $\mathbf{V}_i$ as shown in
(\ref{eq:u_design}). A similar cascaded precoding matrix design
was used in \cite{C_Suh11_TC} for some particular cases of the
antenna configuration without the consideration of user
scheduling. However, the proposed scheme applies to an arbitrary
antenna and channel configuration, where the inter-cell
interference is suppressed with the aid of opportunistic user
scheduling. In addition, we shall show in the sequel that the
optimal DoF can be achievable under a certain user scaling
condition for an arbitrary antenna configuration without any
iterative optimization procedure between the users and BSs.
\end{remark}
\label{line:remark1:end}

\begin{remark}\label{remark:noiteration}
Note that although it is assumed in the proposed scheme that each
user feeds back the $(1 \times S)$-dimensional vector
$\mathbf{f}_i^{[i,j]}$ to its home cell, the amount of CSI
feedback is equivalent to that in the conventional single-cell
MU-MIMO scheme such as ZF or minimum mean-squared error (MMSE)
precoding. On the other hand, the previous iterative transceiver
design schemes \cite{Q_Shi11_TSP, K_Gomadam11_TIT} based on local
CSI for the IBC require all the selected users to feed back the
information of the receive beamformer to all the BSs in the
network, which results in $K$ times more feedback compared to the
single-cell MU-MIMO scheme even for one iteration where the users
feed back their receive beamformers and the BSs update their
transmit precoders once. Furthermore, the information of weight
coefficients also needs to be fed back to all the BSs in
\cite{Q_Shi11_TSP}. We shall show via numerical simulations in the
sequel that even with $K$ times less feedback the proposed scheme
exhibits superior sum-rate compared to the iterative scheme
\cite{Q_Shi11_TSP}.
\end{remark}
We start with the following lemma to derive the achievable DoF.
\begin{lemma}[Lemma 1 \cite{H_Yang13_TWC}] \label{lemma:F_phi}  \label{line:lemma:F_phi}
The CDF of $\eta^{[i,j]}$, denoted by $F_{\eta}(x)$, can be
written as
\begin{equation} \label{eq:F_phi}
F_{\eta}(x) = c_0 x^{(K-1)S-L+1} + o\left(x^{(K-1)S-L+1} \right),
\end{equation}
for $0 \le x <1$, where $f(x)=o(g(x))$ means $\lim_{x\rightarrow
\infty} \frac{f(x)}{g(x)} = 0$, and $\tilde{c}_0$ is a constant
determined by $K$, $S$, and $L$.
\end{lemma}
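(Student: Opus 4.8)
The plan is to recognize $\eta^{[i,j]}$ as the smallest eigenvalue of a complex central Wishart matrix and then read off the small-argument behaviour of its CDF from the joint eigenvalue density. Dropping the superscript $[i,j]$ for brevity, equations (\ref{eq:u_design})--(\ref{eq:LIF_beamforming_simple}) give $\eta=\min_{\|\mathbf{u}\|=1}\|\mathbf{G}\mathbf{u}\|^2={\sigma_L}^2=\lambda_{\min}(\mathbf{W})$, where $\mathbf{W}\triangleq\mathbf{G}^{\mathsf{H}}\mathbf{G}\in\mathbb{C}^{L\times L}$ and, as noted just below (\ref{eq:LIF_beamforming_simple}), $\mathbf{G}\in\mathbb{C}^{(K-1)S\times L}$ has i.i.d.\ $\mathcal{CN}(0,1)$ entries. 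Because the standing assumption $L<(K-1)S+1$ forces $(K-1)S\ge L$, the matrix $\mathbf{G}$ is tall and $\mathbf{W}$ is a complex central Wishart matrix of dimension $L$ with $(K-1)S$ degrees of freedom. Thus $\eta$ is exactly the minimum eigenvalue $\lambda_L$ of such a matrix, and the lemma is a statement about the hard-edge behaviour of $\lambda_L$ near the origin.

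First I would invoke the standard joint density of the ordered eigenvalues $\lambda_1>\cdots>\lambda_L>0$ of a complex Wishart matrix, namely
\begin{equation}
f(\lambda_1,\ldots,\lambda_L)=c\prod_{\ell=1}^{L}\lambda_\ell^{(K-1)S-L}\,e^{-\lambda_\ell}\prod_{1\le a<b\le L}(\lambda_a-\lambda_b)^2,
\end{equation}
with $c$ a normalizing constant depending only on $K$, $S$, and $L$. The marginal density of $\lambda_{\min}=\lambda_L$ follows by integrating $\lambda_1,\ldots,\lambda_{L-1}$ over $\lambda_1>\cdots>\lambda_{L-1}>\lambda_L$. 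The key observation is that, as $\lambda_L=x\to 0$, the only factor that vanishes is $\lambda_L^{(K-1)S-L}=x^{(K-1)S-L}$, while the remaining exponentials and the Vandermonde factor (viewed as functions of $\lambda_1,\ldots,\lambda_{L-1}$) converge pointwise to their values at $\lambda_L=0$ and are dominated by an integrable envelope via the $e^{-\lambda_\ell}$ decay. Pulling out $x^{(K-1)S-L}$ and applying dominated convergence to the $(L-1)$-fold integral yields $f_{\lambda_{\min}}(x)=\tilde c_0\,x^{(K-1)S-L}+o\!\left(x^{(K-1)S-L}\right)$ with $\tilde c_0$ finite and strictly positive, and integrating from $0$ to $x$ gives
\begin{equation}
F_\eta(x)=\frac{\tilde c_0}{(K-1)S-L+1}\,x^{(K-1)S-L+1}+o\!\left(x^{(K-1)S-L+1}\right),
\end{equation}
which is the claimed form with $c_0=\tilde c_0/\big((K-1)S-L+1\big)$.

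The main obstacle is the rigorous justification of the limit interchange together with the verification that the leading constant is finite and nonzero, i.e.\ that marginalizing the larger eigenvalues introduces no further power of $x$. At $\lambda_L=0$ the Vandermonde contributes factors $(\lambda_\ell-\lambda_L)^2=\lambda_\ell^2$ for $\ell<L$, so the limiting integrand is again of Wishart type in the reduced variables and the integral $\int_{\lambda_1>\cdots>\lambda_{L-1}>0}\prod_{\ell<L}\lambda_\ell^{(K-1)S-L+2}e^{-\lambda_\ell}\prod_{1\le a<b\le L-1}(\lambda_a-\lambda_b)^2\,d\lambda_1\cdots d\lambda_{L-1}$ is manifestly convergent, since the integrand decays exponentially and carries only nonnegative powers at the origin; this pins down $\tilde c_0\in(0,\infty)$. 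A more self-contained alternative avoids the eigenvalue density: one reads $\{\eta\le x\}$ as the event that the tall Gaussian matrix $\mathbf{G}$ has smallest singular value below $\sqrt{x}$, whose probability scales as $\big(\sqrt{x}\big)^{2((K-1)S-L+1)}$ by the standard hard-edge scaling of Gaussian matrices; this recovers the exponent $(K-1)S-L+1$ directly but still needs the density to determine $c_0$.
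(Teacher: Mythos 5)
Your proof is correct: identifying $\eta^{[i,j]}$ with the smallest eigenvalue of the $L\times L$ complex central Wishart matrix $\mathbf{G}^{[i,j]\,\mathsf{H}}\mathbf{G}^{[i,j]}$ with $(K-1)S\ge L$ degrees of freedom, and reading off the leading power of the ordered-eigenvalue density at the hard edge, is exactly the argument behind the cited result \cite{H_Yang13_TWC}; the present paper gives no proof of its own, importing the lemma wholesale. The only remarks are cosmetic: the little-$o$ must be read as $x\to 0$ (the paper's ``$x\to\infty$'' is a typo, as is the mismatch between $c_0$ and $\tilde{c}_0$), and your dominated-convergence step properly accounts for both the $x$-dependent integration region and the finiteness and positivity of the leading constant.
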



We further present the following lemma for the probabilistic
interference level of the ODIA.
\begin{lemma} \label{lemma:CDF_scaling}
The sum-interference remains constant with high probability for
increasing SNR, that is,
\begin{align}
\label{eq:P_def}\mathcal{P}&\triangleq
\lim_{\textsf{SNR}\rightarrow \infty} \textrm{Pr}
\Bigg\{\sum_{i=1}^{K}\sum_{j=1}^{S} I^{[i,j]} \le \epsilon
\Bigg\}=1 \displaybreak[0]
\end{align}
for any $0<\epsilon<\infty$,  if
\begin{equation}
N = \omega\left( \mathsf{SNR}^{(K-1)S-L+1} \right).
\end{equation}
\end{lemma}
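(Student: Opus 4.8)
The plan is to reduce the statement to a claim about the smallest order statistics of the i.i.d.\ scheduling metrics within each cell, and then to control a binomial lower tail using the small-$x$ behavior of $F_{\eta}$ supplied by Lemma~\ref{lemma:F_phi}. First I would use the identity $I^{[i,j]} = S\,\eta^{[i,j]}\,\mathsf{SNR}$ (valid because $\|\mathbf{f}_k^{[i,j]}\|^2$ does not depend on the index $s$) to write
\begin{equation}
\sum_{i=1}^K\sum_{j=1}^S I^{[i,j]} = S\,\mathsf{SNR}\sum_{i=1}^K\sum_{j=1}^S \eta^{[i,j]} .
\end{equation}
Since the BS in cell $i$ selects the $S$ users having the smallest metrics, each selected value $\eta^{[i,j]}$, $j=1,\dots,S$, is no larger than the $S$-th smallest order statistic $\eta^{[i]}_{(S)}$ of $\{\eta^{[i,1]},\dots,\eta^{[i,N]}\}$. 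Hence $\sum_{j=1}^S\eta^{[i,j]}\le S\,\eta^{[i]}_{(S)}$, which gives the deterministic bound $\sum_{i,j}I^{[i,j]}\le S^2\,\mathsf{SNR}\sum_{i=1}^K\eta^{[i]}_{(S)}$.

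Next I would fix the threshold $\tau \triangleq \epsilon/(K S^2\,\mathsf{SNR})$, chosen precisely so that the event $\{\eta^{[i]}_{(S)}\le\tau \text{ for all } i\in\mathcal{K}\}$ forces $\sum_{i,j}I^{[i,j]}\le\epsilon$. The metrics $\{\eta^{[i,j]}\}_{j=1}^N$ are i.i.d.\ across users within each cell and share a common distribution across cells, because the $\mathbf{G}^{[i,j]}$ have independent, identically Gaussian entries as noted after (\ref{eq:LIF_beamforming_simple}). Therefore, by the union bound it suffices to show $K\,\textrm{Pr}\{\eta^{[1]}_{(S)}>\tau\}\to 0$, and the event $\{\eta^{[1]}_{(S)}>\tau\}$ is exactly the event that fewer than $S$ of the $N$ i.i.d.\ metrics fall below $\tau$, i.e.\ $\{X\le S-1\}$ with $X\sim\mathrm{Binomial}(N,p)$ and $p=F_{\eta}(\tau)$.

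The decisive quantity is then the mean $Np$. Since $\tau\to 0$ as $\mathsf{SNR}\to\infty$, Lemma~\ref{lemma:F_phi} applies and yields $p = c_0\,\tau^{(K-1)S-L+1}(1+o(1))$, whence
\begin{equation}
Np = c_0\left(\frac{\epsilon}{KS^2}\right)^{(K-1)S-L+1} N\,\mathsf{SNR}^{-((K-1)S-L+1)}(1+o(1)).
\end{equation}
Under the hypothesis $N=\omega(\mathsf{SNR}^{(K-1)S-L+1})$ we have $N\,\mathsf{SNR}^{-((K-1)S-L+1)}\to\infty$, so $Np\to\infty$. For fixed $S$, a binomial lower tail with diverging mean vanishes (for instance by a Chernoff bound for the lower tail of $X$; intuitively $X$ is asymptotically Poisson with diverging mean, so its mass on $\{0,\dots,S-1\}$ goes to zero), giving $\textrm{Pr}\{X\le S-1\}\to 0$. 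Combining with the union bound yields $\mathcal{P}=1$.

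I expect the main obstacle to be the bookkeeping that converts the joint sum-interference bound into a clean per-cell threshold event, together with the verification that only the leading term of the CDF expansion survives inside $Np$: one must confirm that $\tau\to 0$ so that Lemma~\ref{lemma:F_phi} is valid, and that the $o(\tau^{d})$ correction, with $d=(K-1)S-L+1$, is negligible relative to $c_0\tau^{d}$. The concentration step itself is routine once $Np\to\infty$ is established, but some care is needed because $S$ remains fixed while the mean diverges, so a one-sided lower-tail estimate—rather than a two-sided concentration inequality—is what the argument requires.
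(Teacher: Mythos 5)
Your proposal is correct and follows essentially the same route as the paper's proof: the same reduction of the sum-interference event to the per-cell threshold $\eta^{[i,j]}\le \epsilon\,\mathsf{SNR}^{-1}/(KS^2)$ via the identity $\sum_{i,j}I^{[i,j]}=S\,\mathsf{SNR}\sum_{i,j}\eta^{[i,j]}$, the same order-statistic observation that the selected metrics are the $S$ smallest of $N$ i.i.d.\ draws, and the same conclusion that the binomial lower tail $\textrm{Pr}\{X\le S-1\}$ vanishes because $N F_{\eta}(\tau)\to\infty$ exactly when $N=\omega(\mathsf{SNR}^{(K-1)S-L+1})$. The paper simply expands $\sum_{i<S}\binom{N}{i}A^{i}(1-A)^{N-i}$ and bounds each term by $(NA)^{i}(1-A)^{N-i}$ rather than invoking a Chernoff/Poisson argument, which is an immaterial difference.
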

\begin{proof}
See appendix \ref{app:lemma2}.
\end{proof}

Now, the following theorem establishes the DoF achievability of
the proposed ODIA.

\begin{theorem}[User scaling law] \label{theorem:DoF}
The proposed ODIA scheme with the scheduling metric
(\ref{eq:LIF_beamforming_simple}) achieves the optimal $KS$ DoF
for given $S$ with high probability  if
\begin{equation} \label{eq:N_scaling}
N=\omega\left(\textsf{SNR}^{(K-1)S-L+1}\right).
\end{equation}
\end{theorem}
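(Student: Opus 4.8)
The plan is to reduce the DoF claim to the probabilistic interference bound already established in Lemma~\ref{lemma:CDF_scaling}, and then to invoke the achievable-rate bound derived in \eqref{eq:data_rate_single_user_bound3}. The logical skeleton is: (i) translate the definition of total DoF in \eqref{eq:sum_DoF} into a per-user condition on the residual interference $I^{[i,j]}$; (ii) show that this condition holds with high probability under the stated user-scaling law; and (iii) conclude that each of the $KS$ selected streams contributes one DoF, for a total of $KS$, which matches the optimal outer bound for the $K$-cell MIMO IBC.

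First I would recall that, from \eqref{eq:data_rate_single_user_bound3}, the achievable rate of each selected user satisfies
\begin{equation}
R^{[i,j]} \ge \log_2\left(\mathsf{SNR}\right) + \log_2\left(\frac{1}{\mathsf{SNR}} + \frac{\gamma^{[i,j]}/\left\|\mathbf{v}^{(\max)}_{i}\right\|^2}{S/\left\|\mathbf{v}^{(\max)}_{i}\right\|^2 + I^{[i,j]}}\right). \nonumber
\end{equation}
The key observation is that if the residual interference $I^{[i,j]}$ is bounded by a finite constant $\epsilon$ as $\mathsf{SNR}\to\infty$, then the second logarithm converges to a finite (SNR-independent) constant, since $\gamma^{[i,j]}$ and $\left\|\mathbf{v}^{(\max)}_{i}\right\|^2$ depend only on the channel realization and not on SNR. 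Hence $R^{[i,j]}/\log\mathsf{SNR} \to 1$ for every selected user, and summing over the $KS$ selected streams gives $\mathrm{DoF} = KS$ in the limit.

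The remaining task is therefore to guarantee that $I^{[i,j]}<\epsilon$ for all $i\in\mathcal{K}$, $j\in\mathcal{S}$ simultaneously, with probability approaching one. This is exactly what Lemma~\ref{lemma:CDF_scaling} delivers: under $N=\omega(\mathsf{SNR}^{(K-1)S-L+1})$, the event $\sum_{i,j} I^{[i,j]}\le\epsilon$ has probability tending to $1$. Since each summand is nonnegative, the bound on the sum immediately forces each individual $I^{[i,j]}\le\epsilon$, so the per-user condition $I^{[i,j]}<\epsilon$ holds on the same high-probability event. One subtlety worth being careful about is that the rate bound and the interference lemma must be combined on the \emph{same} probability event and the limits taken in the correct order: I would fix $\epsilon$, take the high-probability event from Lemma~\ref{lemma:CDF_scaling}, evaluate the SNR-limit of the rate ratio on that event, and only afterward let $\epsilon$ play its role. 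Because the conclusion of Lemma~\ref{lemma:CDF_scaling} already holds for \emph{any} finite $\epsilon>0$, no further optimization over $\epsilon$ is needed.

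The main obstacle is not in the DoF arithmetic, which is routine, but in justifying the optimality of $KS$ — i.e., that $KS$ is indeed the correct outer bound so that the achieved value is the \emph{optimal} DoF rather than merely \emph{an achievable} one. For this I would appeal to the single-user decomposition: once inter-cell interference vanishes (DoF-wise) and the zero-forcing matrix $\mathbf{V}_i$ in \eqref{eq:ZF_BF} removes intra-cell interference as shown in \eqref{eq:rec_vector_ZF_BF}, each cell reduces to an interference-free point-to-point link carrying $S$ streams, and no scheme can exceed one DoF per selected stream per cell, giving the matching upper bound $KS$. The only real work left is bookkeeping to ensure that the heavy lifting of Lemma~\ref{lemma:CDF_scaling} — the probabilistic decay of the smallest singular value $\sigma^{[i,j]}_{L}$ captured through \eqref{eq:LIF_beamforming_simple} and the CDF in Lemma~\ref{lemma:F_phi} — transfers cleanly to the sum-rate expression without hidden SNR dependence sneaking in through $\gamma^{[i,j]}$ or $\mathbf{v}^{(\max)}_{i}$.
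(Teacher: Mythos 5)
Your proposal is correct and follows essentially the same route as the paper: lower-bound each selected user's rate via \eqref{eq:data_rate_single_user_bound3}, invoke Lemma~\ref{lemma:CDF_scaling} to make the interference term $I^{[i,j]}$ bounded by $\epsilon$ with probability tending to one under the stated user scaling, and conclude $\mathrm{DoF}\ge KS\cdot\mathcal{P}\to KS$. Your added remarks on the order of limits and on the outer bound are sensible but not part of the paper's argument, which (like most of the OIA literature) treats the $KS$ outer bound as known and proves only achievability.
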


\begin{proof}
If the sum-interference remains constant for increasing SNR with
probability $\mathcal{P}$, the achievable rate in
(\ref{eq:data_rate_single_user_bound3}) can be further bounded by
\begin{align}
&R^{[i,j]}  \nonumber \\ & \ge  \mathcal{P} \!  \cdot \! \left[
\log_2\left( \mathsf{SNR}\right)\!\! + \!\! \log_2 \!\! \left(
\!\! \frac{1}{\mathsf{SNR}} \! + \! \frac{
\gamma^{[i,j]}/\left(S\left\|
\mathbf{v}^{(\max)}_{i}\right\|^2\right)}{ 1/\left\|
\mathbf{v}^{(\max)}_{i}\right\|^2+ \epsilon } \! \right) \!\!
\right], \label{eq:data_rate_single_user_bound5}
\end{align}
for any $0\le \epsilon < \infty$. Thus, the achievable DoF in
(\ref{eq:sum_DoF}) can be bounded by
\begin{equation} \label{eq:DoF_SVD_LB}
\textrm{DoF} \ge KS \cdot \mathcal{P}.
\end{equation}
From Lemma \ref{lemma:CDF_scaling}, it is immediate to show that
$\mathcal{P}$ tends to 1, and hence $KS$ DoF is achievable if $N =
\omega\left( \mathsf{SNR}^{(K-1)S-L+1}\right)$, which proves the
theorem.
\end{proof}

\textcolor{black}{From Theorem \ref{theorem:DoF}, it is shown that
there exist a fundamental trade-off between the achievable DoF
$KS$ and required user scaling of  $N = \omega\left(
\mathsf{SNR}^{(K-1)S-L+1}\right)$. This trade-off can also be
observed in terms of the sum-rate even under a practical system
setup, as we shall show in Section \ref{SEC:Sim}. Therefore, a
higher $S$ value can be chosen to achieve higher DoF or sum-rate
if there exist more users in the network. }

The following remark discusses the uplink and downlink duality on
the DoF achievability within the OIA framework.
\begin{remark}[Uplink-downlink duality on the DoF achievability] \label{remark:up_down_duality}
\label{line:duality_power:start}The same scaling condition of
$N=\omega\left( \mathsf{SNR}^{K(S-1)-L+1}\right)$ was achieved to
obtain $KS$ DoF in the $K$-cell uplink interference channel
\cite{H_Yang13_TWC}, each cell of which is composed of a BS with
$M$ antennas and $N$ users each with $L$ antennas. Similarly as in
the proposed scheme, the uplink scheme \cite{H_Yang13_TWC} also
selects $S$ users that generate the minimal interference to the
receivers (BSs).  In the uplink scheme, the transmitters (users)
perform SVD-based beamforming and the receivers (BSs) employ a ZF
equalization, while in the proposed downlink case the transmitters
(BSs) perform ZF precoding and the receivers (users) employ
SVD-based beamforming. In addition, each transmitter sends the
information on effective channel vectors to the corresponding
receiver in the uplink case, and vise versa in the downlink case.
The transmit power per spatial stream is the same for both the
cases. Therefore, Theorem \ref{theorem:DoF} implies that the same
DoF is achievable with the same user scaling law for the downlink
and uplink cases. \label{line:duality_power:end}
\end{remark}

The user scaling law characterizes the trade-off between the
asymptotic DoF and number of users, i.e., the more number of
users, the more achievable DoF.
In addition, we relate the derived user scaling law to the interference decaying rate with respect to $N$ for given SNR in the following theorem. 

\begin{theorem}[Interference decaying rate] \label{theorem:scaling_decay}
If the user scaling condition to achieve a target DoF is given by
$N = \omega \left( \textsf{SNR}^{\tau'}\right)$ for some
$\tau'>0$, then the interference decaying rate is given by
\begin{align}
 E\left\{\frac{1}{\eta^{[i,j]}} \right\} \ge \Theta\left( N^{1/\tau'}\right),
\end{align}
where $f(x) = \Theta(g(x))$ if $f(x) = O(g(x))$ and $g(x) =
O(f(x))$.
\end{theorem}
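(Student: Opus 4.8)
The plan is to reduce the claim to the behaviour of the $S$-th smallest of the $N$ i.i.d.\ scheduling metrics in a cell, and then to feed in the near-origin expansion of the metric CDF supplied by Lemma~\ref{lemma:F_phi}. Write $\eta_{(S)}$ for the $S$-th smallest of $\eta^{[i,1]},\dots,\eta^{[i,N]}$. Because each BS keeps exactly the $S$ users with the smallest metrics, every selected index $j$ obeys $\eta^{[i,j]}\le\eta_{(S)}$, hence $1/\eta^{[i,j]}\ge 1/\eta_{(S)}$, and by Jensen's inequality applied to the convex map $x\mapsto 1/x$,
\begin{equation}
E\left\{\frac{1}{\eta^{[i,j]}}\right\}\ge E\left\{\frac{1}{\eta_{(S)}}\right\}\ge\frac{1}{E\{\eta_{(S)}\}}.
\end{equation}
The Jensen step is deliberate: it sidesteps any integrability concern for $E\{1/\eta_{(S)}\}$ and reduces everything to an \emph{upper} bound on the mean $E\{\eta_{(S)}\}$.

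Next I would pin down which exponent governs the CDF. For the ODIA the scaling-law exponent of Theorem~\ref{theorem:DoF} and the CDF exponent of Lemma~\ref{lemma:F_phi} are one and the same, namely $(K-1)S-L+1$; more fundamentally the user scaling law is produced by requiring the per-stream interference $I^{[i,j]}=S\,\eta^{[i,j]}\mathsf{SNR}$ of the selected users to stay $O(1)$, i.e.\ $\eta_{(S)}\cdot\mathsf{SNR}=O(1)$, together with the order-statistic scaling $\eta_{(S)}=\Theta(N^{-1/\rho})$ in which $\rho$ is the vanishing order of $F_\eta$ at the origin. Thus a scaling law with exponent $\tau'$ is equivalent to $F_\eta(x)=c_0 x^{\tau'}+o(x^{\tau'})$ near $0$, with $c_0$ the constant of Lemma~\ref{lemma:F_phi}. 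I would then express the mean as a tail integral,
\begin{equation}
E\{\eta_{(S)}\}=\int_0^1\Pr\{\eta_{(S)}>x\}\,dx=\int_0^1\sum_{m=0}^{S-1}\binom{N}{m}F_\eta(x)^m\bigl(1-F_\eta(x)\bigr)^{N-m}\,dx,
\end{equation}
and substitute that expansion. For fixed $S$ the integrand lives in the window $x=O(N^{-1/\tau'})$, where $NF_\eta(x)=\Theta(1)$; the change of variables $t=Nc_0 x^{\tau'}$ turns each summand into a Poisson weight $t^m e^{-t}/m!$ integrated against $t^{1/\tau'-1}\,dt$, so the whole integral collapses to a finite Gamma-type constant times $(Nc_0)^{-1/\tau'}$, giving $E\{\eta_{(S)}\}=\Theta(N^{-1/\tau'})$.

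Combining the two displays yields $E\{1/\eta^{[i,j]}\}\ge 1/E\{\eta_{(S)}\}=\Theta(N^{1/\tau'})$, which is exactly the assertion. The main obstacle is making the order-statistic asymptotics in the second step rigorous: I must verify that the $o(x^{\tau'})$ remainder in Lemma~\ref{lemma:F_phi} does not perturb the leading order, and that the part of the integral over $x$ bounded away from $0$ is negligible. The cleanest route is to split the integral at $x=N^{-1/(2\tau')}$, bound the far tail by $\bigl(1-F_\eta(\delta)\bigr)^{N-S+1}$ for a fixed $\delta>0$ so that it decays exponentially in $N$, and control the near part after the substitution by dominated convergence. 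Checking that the $S-1$ lower-order terms of the Poisson sum each contribute the same order $N^{-1/\tau'}$ (rather than something smaller) is the one place that genuinely needs care.
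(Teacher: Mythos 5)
Your proposal is correct and follows essentially the same route as the paper: identify the SNR exponent $\tau'$ of the user scaling law with the vanishing order of $F_\eta$ at the origin (Lemma~\ref{lemma:F_phi}), and then extract the $\Theta(N^{-1/\tau'})$ behaviour of the selected users' metrics from the order statistics of $N$ i.i.d.\ draws. The paper itself omits the second step by deferring to an external reference (Lemma~4 of \cite{H_Yang14_TSP}); your Jensen-plus-tail-integral computation of $E\{\eta_{(S)}\}$ is a sound, self-contained way of supplying exactly the detail the paper outsources.
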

\begin{proof}
\label{line:theorem2_proof:start}From the proof of Theorem
\ref{theorem:DoF}, the user scaling condition to achieve a target
DoF is given by $N = \omega \left( \textsf{SNR}^{\tau'}\right)$ if
and only if the CDF of $\eta^{[i,j]}$ is given by $a_0 x^{-\tau'}
+ o(x^{-\tau'})$ for $\tau'>0$. The theorem can be shown by
following the footsteps of the proof of \cite[Lemma
4]{H_Yang14_TSP}, and the detailed proof is omitted.
\label{line:theorem2_proof:end}
\end{proof}

%
From Theorem \ref{theorem:scaling_decay}, the interference
decaying rate of the proposed ODIA for the $j$th selected user in
the $i$-th cell with respect to $N$ is given by
\begin{equation}
E\left\{\frac{1}{\eta^{[i,j]}} \right\} \ge \Theta\left(
N^{1/((K-1)S-L+1)}\right),
\end{equation}
which is also the same as the result in the uplink channel
\cite{H_Yang14_TSP}. The user scaling law also provides an insight
on the interference decaying rate with respect to $N$ for given
SNR; that is, the smaller SNR exponent of the user scaling law,
the faster interference decreasing rate with respect to $N$.

\vspace{20pt}
%
%
%
%
%
\subsection{Comparison to the previous results}\label{subsec:DoF_comparison}
\label{line:comparison1:start}In this subsection, the DoF
achievability is compared with the previous results in
\cite{H_Nguyen13_arXiv, J_Lee13_TWC, W_Shin12_IEICE}. From
\cite[Lemma 4.2]{H_Nguyen13_arXiv}, choosing $M_i = S$ ($S\le M$)
therein, where $M_i$ denotes the number of spatial streams in the
$i$-th cell, $S$ DoF is achievable per cell, i.e., $KS$ DoF in
total, if $N = \Theta\left( \textrm{SNR}^{\rho}\right)$ for
$\rho>KS-L$; or equivalently,
\begin{align} \label{eq:N_scaling_previous}
N = \omega\left( \textrm{SNR}^{KS-L}\right).
\end{align}
In addition, from \cite[Theorem 6]{J_Lee13_TWC}, choosing $d = S$
($S\le M$) therein, which is the target DoF for each cell, $KS$
DoF is achievable, under the same scaling condition given in
(\ref{eq:N_scaling_previous}). The same conclusion was obtained in
\cite{W_Shin12_IEICE}. Intuitively, the exponent of SNR in the
user scaling condition represents the number of interference
spatial streams after suppression and nulling. Note that the
number of total interference spatial streams received at each user
is $KS-1$ excluding one desired spatial stream, and that the
receive diversity for nulling received interference is $L-1$
leaving one spatial domain for receiving a desired stream. Thus,
the exponent of SNR becomes $(KS-1)-(L-1)=KS-L$ as shown in
(\ref{eq:N_scaling_previous}).

On the other hand, the proposed ODIA pre-nulls $S-1$ intra-cell
interference signals at the transmitter, and hence the exponent
becomes $(KS-1)-(S-1)-(L-1) = (K-1)S-L+1$ as shown in Theorem
\ref{theorem:DoF}. This improvement in the user scaling condition
is attributed to the additional CSI feedback of
${\mathbf{u}^{[i,j]}}^{\mathsf{H}}\mathbf{H}_i^{[i,j]}
\mathbf{P}_i$, which are used to design the precoding matrix
$\mathbf{V}_i$ in (\ref{eq:ZF_BF}). This feedback procedure
corresponds to the feedforward of the effective channel vectors in
the uplink OIA case \cite{H_Yang13_TWC}.

Note that even with this feedback procedure, a straightforward
dual transceiver and user scheduling scheme inspired by the uplink
OIA would result in an infinitely-iterative optimization between
the user scheduling and transceiver design, because the received
interference changes according to the precoding matrix and receive
beamforming vector. Furthermore, only with the cascaded precoding
matrix, the iterative optimization is still needed, since the
coupled optimization issue is still there, as shown in
\cite{C_Suh11_TC}.
 It is indeed the proposed ODIA that can achieve  the same user scaling condition of the uplink case, i.e., $N = \omega\left( \textrm{SNR}^{(K-1)S-L+1}\right)$, without any iterative design. In addition, the
proposed ODIA applies to an arbitrary $M$, $L$, and $K$, whereas
the optimal DoF is achievable only in a few special cases in the
scheme proposed in \cite{C_Suh11_TC}.



\section{ODIA with Limited feedback} \label{SEC:OIA_limited}
In the proposed ODIA scheme, the vectors
(${\mathbf{u}^{[i,j]}}^{\mathsf{H}}\mathbf{H}^{[i,j]}_{i}\mathbf{P}_i$)
in (\ref{eq:effective_CH}) can be fed back to the corresponding BS
using pilots rotated by the effective channels
\cite{L_Choi04_TWC}. However, this analog feedback requires two
consecutive pilot phases for each user: regular pilot for uplink
channel estimation and analog feedback for effective channel
estimation. Hence, pilot overhead grows with respect to the number
of users in the network. As a result, in practical systems with
massive users, it is more preferable to follow the widely-used
limited feedback approach \cite{D_Love03_TIT}, in which  the
information of
${\mathbf{u}^{[i,j]}}^{\mathsf{H}}\mathbf{H}^{[i,j]}_{i}\mathbf{P}_i$
is fed back using  codebooks.

For limited feedback, we define the codebook by
\begin{equation}
\mathcal{C}_f = \left\{ \mathbf{c}_{1}, \ldots,
\mathbf{c}_{N_f}\right\},
\end{equation}
where $N_f$ is the codebook size and $\mathbf{c}_k\in
\mathbb{C}^{S \times 1}$ is a unit-norm codeword, i.e.,
$\|\mathbf{c}_i\|^2=1$. Hence, the number of feedback bits used is
given by
\begin{equation}
n_f = \lceil\log_2 N_f \rceil  (\textrm{bits})
\end{equation}
For ${\mathbf{f}_{i}^{[i,j]}}^{\mathsf{H}} =
{\mathbf{u}^{[i,j]}}^{\mathsf{H}} \mathbf{H}^{[i,j]}_{i}
\mathbf{P}_i,$ each user quantizes the normalized vector for given
$\mathcal{C}_f$ from
\begin{align} \label{eq:f_tilde}
\tilde{\mathbf{f}}_{i}^{[i,j]} = \arg \max_{ \{\mathbf{w} =
\mathbf{c}_k: 1\le k \le N_f\}} \frac{\left|
{\mathbf{f}_{i}^{[i,j]}}^{\mathsf{H}}\mathbf{w}\right|^2}{\left\|
\mathbf{f}_{i}^{[i,j]}\right\|^2}.
\end{align}
Now, the user feeds back three types of information: 1) index of
$\tilde{\mathbf{f}}_{i}^{[i,j]}$, 2) channel gain of $\left\|
\mathbf{f}_{i}^{[i,j]}\right\|^2$, and 3) scheduling metric
$\eta^{[i,j]}$. Note that the feedback of scalar information such
as channel gains and scheduling metrics can be fed back relatively
accurately with a few bits of uplink data, and the main challenge
is on the feedback of the angle of vectors \cite{D_Love03_TIT}.
Thus, in what follows, the aim is to analyze the impact of the
quantized feedback of the index of
$\tilde{\mathbf{f}}_{i}^{[i,j]}$.
 Then, BS $i$ constructs the quantized vectors $\hat{\mathbf{f}}^{[i,j]}$ from
\begin{align}\label{eq:f_hat_def}
\hat{\mathbf{f}}^{[i,j]} \triangleq \left\|
\mathbf{f}_{i}^{[i,j]}\right\|^2 \cdot
\tilde{\mathbf{f}}_{i}^{[i,j]}, \hspace{10pt}i=1, \ldots, S,
\end{align}
and the precoding matrix $\hat{\mathbf{V}}_i$ from
\begin{align} \label{eq:V_hat}
\hat{\mathbf{V}}_i = \hat{\mathbf{F}}_i^{-1}
\boldsymbol{\Gamma}_i,
\end{align}
where $\boldsymbol{\Gamma}_i = \textrm{diag} \left(
\sqrt{\gamma^{[i,1]}}, \ldots, \sqrt{\gamma^{[i,S]}}\right)$ and
$\hat{\mathbf{F}}_i = \left[ \hat{\mathbf{f}}^{[i,1]}, \ldots,
\hat{\mathbf{f}}^{[i,S]}\right]^{\mathsf{H}}$.
%

With limited feedback, the received signal vector after receive
beamforming is written by
\begin{align}\label{eq:rec_vector_after_BF_limited}
\tilde{y}^{[i,j]} &= {\mathbf{f}_{i}^{[i,j]}}^{\mathsf{H}}\hat{\mathbf{V}}_i \mathbf{x}_i + \cdot \sum_{k=1, k\neq i}^{K} {\mathbf{f}_{k}^{[i,j]}}^{\mathsf{H}} \hat{\mathbf{V}}_k \mathbf{x}_k + {\mathbf{u}^{[i,j]}}^{\mathsf{H}} \mathbf{z}^{[i,j]}  \\
& = \sqrt{\gamma^{[i,j]}}x^{[i,j]} + \underbrace{\left(
{\mathbf{f}_{i}^{[i,j]}}^{\mathsf{H}}\hat{\mathbf{V}}_i
\mathbf{x}_i-
\sqrt{\gamma^{[i,j]}}x^{[i,j]}\right)}_{\textrm{residual
intra-cell interference}} \nonumber \\ &+ \sum_{k=1, k\neq i}^{K}
{\mathbf{f}_{k}^{[i,j]}}^{\mathsf{H}} \hat{\mathbf{V}}_k
\mathbf{x}_k + {\mathbf{u}^{[i,j]}}^{\mathsf{H}}
\mathbf{z}^{[i,j]},
\end{align}
where the residual intra-cell interference is non-zero due to the
quantization error in $\hat{\mathbf{V}}_i$.

It is important to note that the residual intra-cell interference
is a function of $\hat{\mathbf{V}}_i$, which includes other users'
channel information, and thus each user treats this term as
unpredictable noise and calculates only the inter-cell
interference for the scheduling metric as in (\ref{eq:eta}); that
is, the scheduling metric is not changed for the ODIA with limited
feedback.

The following theorem establishes the user scaling law for the
ODIA with limited feedback.
\begin{theorem} \label{th:codebook}
The ODIA with a
Grassmannian\footnote{\label{line:footnote_Grassmannian}The
Grassmannian codebook refers to a vector codebook having a
maximized minimum chordal distance of any two codewords, which can
be obtained by solving the Grassmannian line packing problem
\cite{D_Love03_TIT}.} or random codebook achieves the same user
scaling law of the ODIA with perfect CSI described in Theorem
\ref{theorem:DoF}, if
\begin{equation} \label{eq:nf_cond0}
n_f =\omega\left( \log_2 \mathsf{SNR} \right).
\end{equation}
That is, $KS$ DoF is achievable with high probability if $N=\omega
\left( \mathsf{SNR}^{(K-1)S-L+1}\right)$ and (\ref{eq:nf_cond0})
holds true.
\end{theorem}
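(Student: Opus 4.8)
The plan is to reduce the limited-feedback case to the perfect-CSI analysis of Theorem~\ref{theorem:DoF} by isolating the single new quantity that quantization introduces, namely the residual intra-cell interference in (\ref{eq:rec_vector_after_BF_limited}), and showing that a codebook of size $2^{n_f}$ with $n_f=\omega(\log_2\mathsf{SNR})$ renders this term, measured on the SNR scale, negligible. As already observed in the text preceding the statement, quantization does not alter the scheduling metric $\eta^{[i,j]}$, so Lemma~\ref{lemma:CDF_scaling} carries over unchanged: under $N=\omega(\mathsf{SNR}^{(K-1)S-L+1})$ the aggregate inter-cell interference $\sum_{i,j}I^{[i,j]}$ stays below any fixed $\epsilon$ with probability tending to one. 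It therefore suffices to prove that the residual intra-cell interference, once multiplied by $\mathsf{SNR}$, also remains bounded with high probability; combining the two keeps the denominator of $\mathsf{SINR}^{[i,j]}$ bounded while the desired term stays $\Theta(\mathsf{SNR})$, which reproduces a per-stream DoF of one and hence $KS$ DoF in total.

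First I would quantify the per-user quantization error. Writing $\theta^{[i,j]}$ for the angle between the true effective channel $\mathbf{f}_i^{[i,j]}$ and the codeword direction $\tilde{\mathbf{f}}_i^{[i,j]}$ chosen in (\ref{eq:f_tilde}), the standard Grassmannian line-packing / random-vector-quantization distortion bound~\cite{D_Love03_TIT} yields, for a codebook in $\mathbb{C}^S$ of size $2^{n_f}$, a constant $c_1$ with $\mathbb{E}[\sin^2\theta^{[i,j]}]\le c_1\,2^{-n_f/(S-1)}$. Because $\mathbf{u}^{[i,j]}$ is a function only of the interference channels while $\mathbf{H}_i^{[i,j]}$ is independent of them, the vector $\mathbf{f}_i^{[i,j]}=\mathbf{P}_i^{\mathsf{H}}\mathbf{H}_i^{[i,j]\mathsf{H}}\mathbf{u}^{[i,j]}$ has i.i.d. $\mathcal{CN}(0,1)$ entries with an isotropic direction; thus the distortion bound applies directly and $\|\mathbf{f}_i^{[i,j]}\|^2$ is concentrated and bounded with high probability for each of the $KS$ selected streams.

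Next I would propagate this error through the zero-forcing precoder. Since the perfect-CSI choice $\mathbf{V}_i=\mathbf{F}_i^{-1}\boldsymbol{\Gamma}_i$ satisfies $\mathbf{f}_i^{[i,j]\mathsf{H}}\mathbf{V}_i=\sqrt{\gamma^{[i,j]}}\,\mathbf{e}_j^{\mathsf{T}}$, the residual term equals $\big(\mathbf{f}_i^{[i,j]\mathsf{H}}\hat{\mathbf{F}}_i^{-1}-\mathbf{e}_j^{\mathsf{T}}\big)\boldsymbol{\Gamma}_i\mathbf{x}_i$, where $\mathbf{e}_j^{\mathsf{T}}$ is the $j$-th unit row vector; this vanishes when $\hat{\mathbf{F}}_i=\mathbf{F}_i$. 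Writing $\hat{\mathbf{F}}_i=\mathbf{F}_i+\mathbf{E}_i$ with the $j$-th row of $\mathbf{E}_i$ of norm $O(\|\mathbf{f}_i^{[i,j]}\|\sin\theta^{[i,j]})$, and expanding $\hat{\mathbf{F}}_i^{-1}=\mathbf{F}_i^{-1}-\mathbf{F}_i^{-1}\mathbf{E}_i\mathbf{F}_i^{-1}+\cdots$, the residual interference power is bounded by a constant multiple of $\sum_{j}\|\mathbf{f}_i^{[i,j]}\|^2\sin^2\theta^{[i,j]}$ whenever $\hat{\mathbf{F}}_i$ is well conditioned. Hence the residual interference scaled by $\mathsf{SNR}$ is, in expectation, of order $\mathsf{SNR}\cdot 2^{-n_f/(S-1)}=2^{-(n_f/(S-1)-\log_2\mathsf{SNR})}$, which vanishes precisely because $n_f=\omega(\log_2\mathsf{SNR})$ forces $n_f/(S-1)-\log_2\mathsf{SNR}\to\infty$. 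A Markov inequality converts this vanishing expectation into a high-probability bound, and a union bound over the finitely many $KS$ streams preserves probability one in the limit.

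The hard part will be the perturbation analysis of $\hat{\mathbf{F}}_i^{-1}$: one must show that $\hat{\mathbf{F}}_i$ stays uniformly well conditioned with high probability, so that $\|\hat{\mathbf{F}}_i^{-1}\|$ neither blows up nor amplifies the quantization error, and that the neglected higher-order terms of the inverse expansion are genuinely $o(\sin^2\theta^{[i,j]})$. Controlling the smallest singular value of $\hat{\mathbf{F}}_i$---a matrix whose rows are quantized versions of near-Gaussian vectors---is the delicate step; once it is in place, the claim follows by combining the codebook distortion bound with the inter-cell interference result of Lemma~\ref{lemma:CDF_scaling}.
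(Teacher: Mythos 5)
Your overall architecture is the same as the paper's: you isolate the residual intra-cell interference as the only new term, keep Lemma~\ref{lemma:CDF_scaling} for the inter-cell part, decompose the quantized vector into the true direction plus an orthogonal error of size governed by the chordal distance, push that error through a Neumann/Taylor expansion of the ZF inverse, and identify the leading residual term as (quantization distance)$^2$ times a channel-dependent factor times $\mathsf{SNR}$, concluding that $n_f=\omega(\log_2\mathsf{SNR})$ kills it. That is exactly the skeleton of the paper's proof and of Appendix~\ref{app:th_codebook}.

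The genuine gap is in your final probabilistic step. You propose to bound the \emph{unconditional expectation} of the SNR-scaled residual interference by $c\,\mathsf{SNR}\cdot 2^{-n_f/(S-1)}$ and then apply Markov. But the "constant multiple" in your bound contains $\|\mathbf{F}_i^{-1}\|^2$-type factors (the paper's $\delta_1=\nu_i^2{\mathbf{t}^{[i,j]}}^{\mathsf{H}}\mathbf{F}_i^{-1}\boldsymbol{\Gamma}_i^2\mathbf{F}_i^{-\mathsf{H}}\mathbf{t}^{[i,j]}$), and for a square matrix $\mathbf{F}_i$ with i.i.d.\ complex Gaussian rows the expectation of $\mathrm{tr}\left((\mathbf{F}_i\mathbf{F}_i^{\mathsf{H}})^{-1}\right)$ is infinite; the inverse-Wishart mean diverges when the dimension equals the number of samples. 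So the expectation you want to feed into Markov does not exist, and your fallback---showing $\hat{\mathbf{F}}_i$ is "uniformly well conditioned with high probability"---is not attainable either, since the smallest singular value has no uniform lower bound. The fix, which is what the paper actually does, is to never take an expectation over the channel at all: condition on the channel instance, so $\delta_1$ and the higher-order coefficients $\delta_k$ are fixed finite numbers almost surely, and then only the quantization distance $d_i^{\max}$ must be driven to zero on the $\mathsf{SNR}^{-1/2}$ scale. For the Grassmannian codebook this is a deterministic statement via the Hamming-type bound $\left(d^{[i,j]}\right)^2\le N_f^{-1/(S-1)}$, requiring no probability argument; for the random codebook one uses the exact CDF $\mathrm{Pr}\{(d^{[i,j]})^2\le z\}=1-(1-z^{S-1})^{N_f}$ and shows the relevant probability tends to one iff $N_f=\omega(\mathsf{SNR}^{S-1})$. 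Both give $n_f=\omega(\log_2\mathsf{SNR})$, but neither requires the moment control your Markov step presupposes. The paper itself flags this distinction in the remark following the theorem: the residual interference must vanish for any given channel instance, which is precisely why the argument is per-instance rather than in expectation.
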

\begin{proof}
Without loss of generality, the quantized vector
$\hat{\mathbf{f}}^{[i,j]}$ can be decomposed as
\begin{align} \label{eq:f_hat2}
\hat{\mathbf{f}}^{[i,j]} &= \left\|
\mathbf{f}_{i}^{[i,j]}\right\|^2 \cdot
\tilde{\mathbf{f}}_{i}^{[i,j]} \nonumber \\ & =
\sqrt{1-{d^{[i,j]}}^2}\cdot\mathbf{f}_{i}^{[i,j]}+
d^{[i,j]}\left\|\mathbf{f}_{i}^{[i,j]}\right\|^2
\left(\mathbf{t}^{[i,j]}\right),
\end{align}
where $\mathbf{t}^{[i,j]}$ is a unit-norm vector i.i.d. over
$\textrm{null}\left( \mathbf{f}_{i}^{[i,j]}\right)$
\cite{N_Jindal06_TIT, Z_Peng16_TWC}. At this point, we consider
the worse performance case where each user finds
$\hat{\mathbf{f}}^{[i,j]}$ such that with a slight abuse of
notation
\begin{align} \label{eq:f_hat3}
\hat{\mathbf{f}}^{[i,j]} =
\sqrt{1-{d^{\max}_i}^2}\cdot\mathbf{f}_{i}^{[i,j]}+
d^{\max}_i\nu_i \cdot\mathbf{t}^{[i,j]},
\end{align}
where
\begin{eqnarray}
d^{\max}_i &=& \max \left\{d^{[i,1]}, \ldots, d^{[i,S]} \right\},  \nonumber \\
\nu_i &=& \max \left\{ \left\|\mathbf{f}_{i}^{[i,j]}\right\|^2,
j=1, \ldots, S \right\}.
\end{eqnarray}
Note that more quantization error only degrades the achievable
rate, and hence the quantization via (\ref{eq:f_hat3}) yields a
performance lower-bound. Inserting (\ref{eq:f_hat3}) to
(\ref{eq:V_hat}) gives us
\begin{align} \label{eq:V_hat2}
\hat{\mathbf{V}}_i = \left( \sqrt{1- {d^{\max}_i}^2}\mathbf{F}_i +
d^{\max}_i\nu_i\mathbf{T}_i \right)^{-1} \boldsymbol{\Gamma}_i,
\end{align}
where ${\mathbf{F}}_i = \left[ \mathbf{f}_{i}^{[i,1]}, \ldots,
\mathbf{f}_{i}^{[i,S]}\right]^{\mathsf{H}}$ and ${\mathbf{T}}_i =
\left[ {\mathbf{t}}^{[i,1]}, \ldots,
{\mathbf{t}}^{[i,S]}\right]^{\mathsf{H}}$.

The Taylor expansion of $\left(
\sqrt{1-{d^{\max}_i}^2}\mathbf{F}_i + d^{\max}_i\nu_i\mathbf{T}
\right)^{-1}$ in (\ref{eq:V_hat}) gives us
\begin{align} \label{eq:Taylor}
&\left( \sqrt{1-{d^{\max}_i}^2}\mathbf{F}_i +
d^{\max}_i\nu_i\mathbf{T}_i \right)^{-1}  \nonumber \\ &=
\mathbf{F}_i^{-1} - \mathbf{F}_i^{-1}\mathbf{T}_i
\mathbf{F}_i^{-1} \nu_id^{\max}_i +\sum_{k=2}^{\infty}
\mathbf{A}_k \left(d^{\max}_i\right)^k,
\end{align}
where $\mathbf{A}_k$ is a function of $\mathbf{F}_i$ and
$\mathbf{T}_i$. Thus, $\hat{\mathbf{V}}_i$ can be written by
\begin{align} \label{eq:V_hat_final}
\hat{\mathbf{V}}_i = \mathbf{F}_i^{-1}\boldsymbol{\Gamma}_i -
d^{\max}_i\nu_i\mathbf{F}_i^{-1}\mathbf{T}_i
\mathbf{F}_i^{-1}\boldsymbol{\Gamma}_i +\sum_{k=2}^{\infty}
\left(d^{\max}_i\right)^k\mathbf{A}_k\boldsymbol{\Gamma}_i
\end{align}

Inserting (\ref{eq:V_hat_final}) to
(\ref{eq:rec_vector_after_BF_limited}) yields
\begin{align}\label{eq:rec_vector_after_BF_limited3}
\tilde{y}^{[i,j]} &= \sqrt{\gamma^{[i,j]}}x^{[i,j]}  \nonumber \\ &- \underbrace{d^{\max}_i\nu_i{\mathbf{t}^{[i,j]}}^{\textsf{H}}\mathbf{F}_i^{-1}\boldsymbol{\Gamma}_i \mathbf{x}_i + \sum_{k=2}^{\infty}  \left(d^{\max}_i\right)^k{\mathbf{f}_{i}^{[i,j]}}^{\mathsf{H}}\mathbf{A}_k \boldsymbol{\Gamma}_i \mathbf{x}_i }_{\textrm{residual intra-cell interference}} \nonumber \\
& \hspace{0pt}+ \sum_{k=1, k\neq i}^{K}
{\mathbf{f}_{k}^{[i,j]}}^{\mathsf{H}} \hat{\mathbf{V}}_k
\mathbf{x}_k + {\mathbf{u}^{[i,j]}}^{\mathsf{H}}
\mathbf{z}^{[i,j]}.
\end{align}
Consequently,  the rate $R^{[i,j]}$ in
(\ref{eq:data_rate_single_user}) is given by
\begin{align} \label{eq:data_rate_single_user3}
R^{[i,j]}= \log_2 \left( 1+ \frac{ \gamma^{[i,j]} }{ \frac{S+
\Delta^{[i,j]}}{\mathsf{SNR}}+ \sum_{k\neq i}^{K} \sum_{s=1}^{S}
\left| {\mathbf{f}_{k}^{[i,j]}}^{\mathsf{H}}
\mathbf{v}^{[k,s]}\right|^2 } \right),
\end{align}
where
\begin{align} \label{eq:Delta}
\Delta^{[i,j]} = \left(d^{\max}_i\right)^2\delta_1\cdot
\mathsf{SNR} + \sum_{k=2}^{\infty}
\left(d^{\max}_i\right)^{2k}\delta_k\cdot \mathsf{SNR},
\end{align}
\begin{eqnarray}
\delta_1 &=& \left(\nu_i^2{\mathbf{t}^{[i,j]}}^{\textsf{H}}\mathbf{F}_i^{-1}{\boldsymbol{\Gamma}_i}^2\mathbf{F}_i^{-\mathsf{H}}\mathbf{t}^{[i,j]}\right), \nonumber \\
\delta_k &=&
\left({\mathbf{f}_{i}^{[i,j]}}^{\mathsf{H}}\mathbf{A}_k
\boldsymbol{\Gamma}_i^2\mathbf{A}_k^{\mathsf{H}}
\mathbf{f}_{i}^{[i,j]}\right).
\end{eqnarray}
As in (\ref{eq:data_rate_single_user_bound}) to
(\ref{eq:data_rate_single_user_bound3}), the achievable rate can
be bounded by
\begin{align} \label{eq:R_bound_limitedFB}
R^{[i,j]} \!\! & \ge \mathcal{P}'  \! \cdot \!  \left[ \log_2
\mathsf{SNR} + \log_2 \left( \frac{1}{\mathsf{SNR}} \! + \! \frac{
\frac{\gamma^{[i,j]}}{\left\| \mathbf{v}^{(\max)}_{i}\right\|^2}}{
\frac{1}{\left\| \mathbf{v}^{(\max)}_{i}\right\|^2}+ 2\epsilon }
\right) \right],
\end{align}
where
\begin{align}
\label{eq:P_def2}
\mathcal{P}' & \triangleq  \textrm{Pr} \Bigg\{\left(\sum_{i=1}^{K}\sum_{j=1}^{S} I^{[i,j]} \le \epsilon\right) \& \left(\Delta^{[i,j]}/\left\| \mathbf{v}^{(\max)}_{i}\right\|^2 \le \epsilon\right), \nonumber \\ & \quad \quad \quad  \forall i\in \mathcal{K}, j\in \mathcal{S}  \Bigg\} \displaybreak[0] \\
\label{eq:P_def3}& = \textrm{Pr}
\Bigg\{\sum_{i=1}^{K}\sum_{j=1}^{S} I^{[i,j]} \le \epsilon,
\forall i\in \mathcal{K}, j\in \mathcal{S}  \Bigg\}  \nonumber  \\
& \quad \cdot \textrm{Pr} \Bigg\{\Delta^{[i,j]} \le \epsilon',
\forall i\in \mathcal{K}, j\in \mathcal{S}  \Bigg\},
\end{align}
where $\epsilon' \triangleq \epsilon\cdot\left\|
\mathbf{v}^{(\max)}_{i}\right\|^2$.
 Here, (\ref{eq:P_def3}) follows from the fact that the inter-cell interference $I^{[i,j]}$ and residual intra-cell interference  $\Delta^{[i,j]}$ are independent each other. Note also that the level of residual intra-cell interference does not affect the user selection and is determined only by the codebook size $N_f$. Hence, the user selection result does not change for different $N_f$.

The achievable DoF is given by
 \begin{align}
 \textrm{DoF} \ge \lim_{\textsf{SNR} \rightarrow \infty}
 KS \cdot \mathcal{P}'.
 \end{align}
 If $N=\omega\left(\textsf{SNR}^{(K-1)S-L+1}\right)$, the first term of (\ref{eq:P_def3}) tends to 1 according to Theorem \ref{theorem:DoF}.
Thus, the maximum DoF can be obtained if and only if
$\Delta^{[i,j]} \le \epsilon'$ for all selected users for
increasing SNR.

In Appendix \ref{app:th_codebook}, it is shown that
$\Delta^{[i,j]} \le \epsilon'$ for all selected users if $n_f
=\omega\left( \log_2 \mathsf{SNR}\right)$ for both Grassmannian
and random codebooks. Therefore, if
$N=\omega\left(\textsf{SNR}^{(K-1)S-L+1}\right)$ and $n_f
=\omega\left( \log_2 \mathsf{SNR}\right)$, $\mathcal{P}'$ in
(\ref{eq:P_def3}) tends to 1, which proves the theorem.
\end{proof}
From Theorem \ref{th:codebook}, the minimum number of feedback
bits $n_f$ is characterized to achieve the optimal $KS$ DoF, which
increases with respect to $\log_2(\mathsf{SNR})$. It is worthwhile
to note that the results are the same for the Grassmannian and
random codebooks.

We conclude this section by providing the following comparison to
the well-known conventional results on limited feedback systems.

\begin{remark} \label{line:remark_FB1:start}
In the previous works on limited feedback systems, the performance
analysis was focused on the average SNR or the average rate loss
\cite{C_Au-Yeung09_TWC}. In an average sense, the Grassmannian
codebook is in general outperforms the random codebook. However,
our scheme focuses on the asymptotic codebook performance for
given channel instance for increasing SNR, and it turned out that
this asymptotic behaviour is the same for the two codebooks. In
fact, this result agrees with the previous works e.g.,
\cite{B_Khoshnevis11_Thesis}, in which the performance gap between
the two codebooks was shown to be negligible as $n_f$ increases
through computer simulations.  \label{line:remark_FB1:end}
\end{remark}

\begin{remark}
For the MIMO broadcast channel with limited feedback, where the
transmitter has $L$ antennas and employs the random codebook, it
was shown \cite{N_Jindal06_TIT} that the achievable rate loss for
each user, denoted by $\Delta R$, due to the finite size of the
codebook is \textcolor{black}{upper-bounded} by
\begin{equation}
\Delta < \log_2 \left(1+\textrm{SNR} \cdot 2^{-n_f/(L-1)} \right).
\end{equation}
Thus, to achieve the maximum 1 DoF for each user, or to make the
rate loss negligible as the SNR increases, the term $\textrm{SNR}
\cdot 2^{-n_f/(L-1)}$ should remain constant for increasing SNR.
That is, $n_f$ should scale faster than $(L-1)\log_2
(\textrm{SNR})$. \label{line:remark_FB2:start} Note however that
the proof of Theorem \ref{th:codebook} is different from that in
\cite{N_Jindal06_TIT}, since the residual interference due to the
limited feedback, $\Delta^{[i,j]}$, needs to vanish for any given
channel instance with respect to SNR to achieve a non-zero DoF per
spatial stream. \label{line:remark_FB2:end} Though the system and
proof are different, our results of Theorem \ref{th:codebook} are
consistent with this previous result.
\end{remark}

\section{Spectrally Efficient ODIA (SE-ODIA)} \label{SEC:Threhold_ODIA}
In this section, we propose a spectrally efficient OIA (SE-ODIA)
scheme and show that the proposed SE-ODIA achieves the optimal
multiuser diversity gain $\log \log N$. For the DoF achievability,
it was enough to design the user scheduling in the sense to
minimize inter-cell interference. However, to achieve optimal
multiuser diversity gain, the gain of desired channels also needs
to be considered in user scheduling. The overall procedure of the
SE-ODIA follows that of the ODIA described in Section
\ref{SEC:OIA} except the the third stage `User Scheduling'. In
addition, we assume the perfect feedback of the effective desired
channels
${\mathbf{u}^{[i,j]}}^{\mathsf{H}}\mathbf{H}_i^{[i,j]}\mathbf{P}_i$
for the SE-ODIA. We incorporate the semiorthogonal user selection
algorithm proposed in \cite{T_Yoo06_JSAC} to the ODIA framework
taking into consideration inter-cell interference. Specifically,
the algorithm for the user scheduling at the BS side is as
follows:
\begin{itemize}
\item Step 1: Initialization:
\begin{align}
\mathcal{N}_1& = \{1, \ldots, N\}, \hspace{10pt} s=1
\end{align}
\item Step 2: For each user $j\in \mathcal{N}_s$ in the $i$-th
cell, the $s$-th orthogonal projection vector, denoted by
$\tilde{\mathbf{b}}_{s}^{[i,j]}$, for given $\left\{
\mathbf{b}_{1}^{[i]}, \ldots, \mathbf{b}_{s-1}^{[i]} \right\}$ is
calculated from:
\begin{align}
\tilde{\mathbf{b}}^{[i,j]}_s &= \mathbf{f}_{i}^{[i,j]} -
\sum_{s'=0}^{s-1} \frac{{\mathbf{b}_{s'}^{[i]}}^{\mathsf{H}}
\mathbf{f}_{i}^{[i,j]}}{\|\mathbf{b}_{s'}^{[i]}\|^2}\mathbf{b}_{s'}^{[i]}
\end{align}
Note that if $s=1$, $\tilde{\mathbf{b}}_{1}^{[i,j]} =
\mathbf{f}_{i}^{[i,j]}$. \item Step 3: For the $s$-th user
selection, a user is selected at random from the user pool
$\mathcal{N}_s$ that satisfies the following two conditions:
\begin{align}
\label{eq:C}\mathsf{C}_1:& \eta^{[i,j]} \le \eta_I,
\hspace{10pt}\mathsf{C}_2: \|\tilde{\mathbf{b}}_{s}^{[i,j]}\|^2
\ge \eta_D
\end{align}
Denote the index of the selected user by $\pi(s)$ and define
\begin{equation}
\mathbf{b}_{s}^{[i]} = \tilde{\mathbf{b}}^{[i,\pi(s)]}_s.
\end{equation}
\item Step 4: If $s < S$, then find the $(s+1)$-th user pool
$\mathcal{N}_{s+1}$ from:
\begin{align}
\mathcal{N}_{s+1}& = \left\{j:j \in \mathcal{N}_{s}, j \neq \pi(s), \frac{\left|{\mathbf{f}_{i}^{[i,j]}}^{\mathsf{H}} \mathbf{b}_{s}^{[i]}\right|}{\| \mathbf{f}_{i}^{[i,j]}\| \|\mathbf{b}_{s}^{[i]}\|}  <\alpha\right\}, \nonumber \\
 s &= s+1,
\end{align}
where $\alpha>0$ is a positive constant. Repeat Step 2 to Step 4
until $s=S$.
\end{itemize}

To show the SE-ODIA achieves the optimal multiuser diversity gain,
we start with the following lemma for the bound on
$|\mathcal{N}_s|$.
\begin{lemma}\label{lemma:N_card}
The cardinality of $\mathcal{N}_s$ can be bounded by
\begin{align}
|\mathcal{N}_s| & \gtrsim N \cdot \alpha^{2(S-1)}.
\end{align}
The approximated inequality becomes tight as $N$ increases.
\end{lemma}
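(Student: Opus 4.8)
The plan is to track how the semiorthogonal user selection in Step 4 shrinks the candidate pool from one stage to the next, and to show that the expected fractional shrinkage per stage is governed by the constant $\alpha$. First I would fix attention on the transition from $\mathcal{N}_s$ to $\mathcal{N}_{s+1}$ and observe that a user $j$ survives into $\mathcal{N}_{s+1}$ precisely when the normalized inner product $|{\mathbf{f}_{i}^{[i,j]}}^{\mathsf{H}} \mathbf{b}_{s}^{[i]}| / (\| \mathbf{f}_{i}^{[i,j]}\| \|\mathbf{b}_{s}^{[i]}\|)$ falls below $\alpha$. The quantity being thresholded is the cosine of the angle between the unit-normalized effective channel direction $\mathbf{f}_{i}^{[i,j]}/\|\mathbf{f}_{i}^{[i,j]}\|$ and the fixed direction $\mathbf{b}_{s}^{[i]}$. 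Because each column of $\mathbf{P}_i$ is isotropically distributed and the channel entries are i.i.d.\ $\mathcal{CN}(0,1)$, the direction of $\mathbf{f}_{i}^{[i,j]}$ is isotropically distributed on the unit sphere in $\mathbb{C}^{S}$, independently across users $j$.

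The key step is then to compute the probability that a single isotropic unit vector in $\mathbb{C}^{S}$ has squared cosine below $\alpha^2$ against a fixed direction. For an isotropic unit vector $\mathbf{w}$ in $\mathbb{C}^{S}$ and a fixed unit vector $\mathbf{b}$, the random variable $|\mathbf{w}^{\mathsf{H}}\mathbf{b}|^2$ is Beta-distributed with parameters $(1,S-1)$, so $\Pr\{|\mathbf{w}^{\mathsf{H}}\mathbf{b}|^2 \ge \alpha^2\} = (1-\alpha^2)^{S-1}$ and hence $\Pr\{|\mathbf{w}^{\mathsf{H}}\mathbf{b}| < \alpha\} = 1 - (1-\alpha^2)^{S-1}$. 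Each of the $S-1$ successive orthogonality filters in Steps 2--4 applies essentially the same independent thresholding, so after $S-1$ stages the retained fraction is approximately $\big(1 - (1-\alpha^2)^{S-1}\big)$ raised to the appropriate power, or more simply the surviving fraction at each stage is lower-bounded by a factor close to $\alpha^2$ when $\alpha$ is small. Carrying this through $S-1$ stages and multiplying by the initial size $N$ yields the bound $|\mathcal{N}_s| \gtrsim N \cdot \alpha^{2(S-1)}$.

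I expect the main obstacle to be the dependence structure: the projection directions $\mathbf{b}_{s}^{[i]}$ are themselves built from previously selected users' channels, so the events defining successive pools are not literally independent, and the surviving vectors in $\mathcal{N}_{s+1}$ are conditioned to lie near-orthogonal to $\mathbf{b}_{s}^{[i]}$, which subtly alters their conditional distribution within the remaining subspace. The clean way around this is to argue that, conditioned on the selected directions so far, the effective channel directions of the not-yet-examined users remain isotropic (or isotropic within the orthogonal complement), so each filtering stage independently retains roughly a $\Theta(\alpha^{2})$ fraction; the word ``approximated'' in the statement and the caveat that the bound becomes tight as $N\to\infty$ signal that the argument is only meant to be asymptotically sharp, since for large $N$ a law-of-large-numbers concentration makes each surviving count close to its expectation. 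Thus the final move is to invoke concentration so that $|\mathcal{N}_s|$ concentrates around $N\prod_{s'=1}^{s-1}\big(1-(1-\alpha^2)^{S-1}\big) \gtrsim N\alpha^{2(S-1)}$, with the lower bound following from $1-(1-\alpha^2)^{S-1}\ge \alpha^2$ for small $\alpha$ and the approximation error vanishing as $N$ grows.
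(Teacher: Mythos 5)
Your overall strategy --- isotropy of the effective channel direction, a Beta-distribution computation for the squared cosine against a fixed direction, and a law-of-large-numbers step converting a retention probability into a cardinality --- matches the paper's. The gap is in how you treat the joint survival event. You factor the survival probability into a product of per-stage retention fractions and justify this by asserting that the directions of the surviving users ``remain isotropic (or isotropic within the orthogonal complement)'' after each filtering stage. Neither version of that claim holds: a user in $\mathcal{N}_s$ has already passed $s-1$ angular tests, so its direction is conditioned on those events and is no longer isotropic; nor does it lie in the orthogonal complement of $\mathrm{span}\bigl(\mathbf{b}_1^{[i]},\ldots,\mathbf{b}_{s-1}^{[i]}\bigr)$, since the constraint $|\cos|<\alpha$ is soft. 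Worse, because the $\mathbf{b}_{s'}^{[i]}$ are mutually orthogonal by Gram--Schmidt, the squared cosines of an isotropic direction against them are coordinates of a point uniform on the simplex and hence negatively associated: conditioning on the earlier ones being small pushes the later ones up, so the conditional retention probability at a given stage can fall below the unconditional $1-(1-\alpha^2)^{S-1}$, and your per-stage lower bound of $\alpha^2$ does not follow without additional argument.

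The paper avoids the factorization entirely. It writes $|\mathcal{N}_s| \approx N\cdot \Pr\{\text{a fresh isotropic } \mathbf{h} \text{ satisfies all } s-1 \text{ angular constraints}\}$ and lower-bounds that joint probability by the probability of the \emph{stronger} event that $\mathbf{h}$ has angle-cosine less than $\alpha$ with every vector in $\mathrm{span}\bigl(\mathbf{b}_1^{[i]},\ldots,\mathbf{b}_{s-1}^{[i]}\bigr)$, i.e., that the fraction of $\|\mathbf{h}\|^2$ projected onto that $(s-1)$-dimensional subspace is below $\alpha^2$. That fraction is $\mathrm{Beta}(s-1,S-s+1)$, so the probability equals $I_{\alpha^2}(s-1,S-s+1)\ge I_{\alpha^2}(S-1,1)=\alpha^{2(S-1)}$, which is exactly the claimed bound in a single computation. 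If you want to keep the stage-by-stage narrative you would have to prove the per-stage conditional bound directly; the single subspace-projection argument is the cleaner and correct route.
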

\begin{proof}
See Appendix \ref{app:Ns_cardinality}.
\end{proof}

We also introduce the following useful lemma.
\begin{lemma}\label{lemma:quadratic}
If $x \in \mathbb{C}^{M \times 1}$ has its element i.i.d.
according to $\mathcal{CN}(0, \sigma^2)$ and $\mathbf{A}$ is an
idempotent matrix of rank $r$ (i.e., $\mathbf{A}^2= \mathbf{A}$),
then $\mathbf{x}^{\mathsf{H}} \mathbf{A} \mathbf{x}/\sigma^2$ has
a Chi-squared distribution with $2r$ degrees-of-freedom.
\end{lemma}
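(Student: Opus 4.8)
The plan is to reduce the quadratic form to a sum of independent scalar terms by diagonalizing $\mathbf{A}$ and then exploiting the rotational invariance of the circularly symmetric complex Gaussian vector $\mathbf{x}$. Throughout I take $\mathbf{A}$ to be an orthogonal projection, i.e.\ Hermitian as well as idempotent, which is the case relevant to the subsequent analysis; in any event $\mathbf{x}^{\mathsf{H}}\mathbf{A}\mathbf{x}$ depends only on the Hermitian part $(\mathbf{A}+\mathbf{A}^{\mathsf{H}})/2$ of $\mathbf{A}$, so no generality is lost for the purpose of computing the distribution.

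First I would record the spectral structure of $\mathbf{A}$. Since $\mathbf{A}^2=\mathbf{A}$, the minimal polynomial of $\mathbf{A}$ divides $\lambda(\lambda-1)$, so every eigenvalue lies in $\{0,1\}$; because $\mathbf{A}$ is Hermitian it admits a unitary eigendecomposition $\mathbf{A}=\mathbf{U}\boldsymbol{\Lambda}\mathbf{U}^{\mathsf{H}}$, and the multiplicity of the unit eigenvalue equals $\rank(\mathbf{A})=r$. Hence, after reordering, $\boldsymbol{\Lambda}=\mathrm{diag}(\underbrace{1,\ldots,1}_{r},0,\ldots,0)$.

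Next I would apply the unitary change of variables $\mathbf{y}=\mathbf{U}^{\mathsf{H}}\mathbf{x}$. Since $\mathbf{x}\sim\mathcal{CN}(0,\sigma^2\mathbf{I}_M)$ has i.i.d.\ circularly symmetric entries, its distribution is invariant under the unitary map $\mathbf{U}^{\mathsf{H}}$, so $\mathbf{y}$ is again $\mathcal{CN}(0,\sigma^2\mathbf{I}_M)$ with i.i.d.\ entries. The quadratic form then collapses to $\mathbf{x}^{\mathsf{H}}\mathbf{A}\mathbf{x}=\mathbf{y}^{\mathsf{H}}\boldsymbol{\Lambda}\mathbf{y}=\sum_{i=1}^{r}|y_i|^2$, a sum of $r$ independent terms involving only the $r$ surviving coordinates.

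Finally I would identify the distribution of each term. Writing $y_i=a_i+\jmath b_i$ with $a_i,b_i$ i.i.d.\ real Gaussian of variance $\sigma^2/2$, each $|y_i|^2/\sigma^2$ is, up to the usual scaling absorbed in the communications-theory normalization of the chi-squared law, the sum of two squared independent Gaussians and thus contributes two degrees of freedom; summing over the $r$ coordinates yields $\mathbf{x}^{\mathsf{H}}\mathbf{A}\mathbf{x}/\sigma^2$ distributed as a chi-squared random variable with $2r$ degrees of freedom. The one delicate step is the eigenstructure argument: it is precisely idempotency that pins the eigenvalues to $\{0,1\}$ and Hermiticity that supplies a \emph{unitary} diagonalizer, which is exactly what makes the rotational-invariance reduction valid—without normality one would only have a non-unitary similarity and the distribution would not reduce to a clean chi-squared form.
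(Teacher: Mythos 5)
Your argument is correct, but it is worth noting that the paper does not actually prove this lemma at all: its ``proof'' is a one-line citation to the textbook of Seber and Lee on linear regression. What you have written is essentially the standard textbook derivation made explicit for the complex case --- eigenvalues of an idempotent matrix lie in $\{0,1\}$, Hermiticity gives a unitary diagonalizer, unitary invariance of the circularly symmetric Gaussian reduces the quadratic form to $\sum_{i=1}^{r}|y_i|^2$, and each $|y_i|^2$ contributes two real squared Gaussians. Two remarks. First, you are right to insist on Hermiticity in addition to idempotency: the lemma as stated in the paper omits it, and for a non-Hermitian idempotent $\mathbf{A}$ the Hermitian part $(\mathbf{A}+\mathbf{A}^{\mathsf{H}})/2$ need not be idempotent, so the conclusion can fail; the paper's actual application uses the orthogonal projection $\mathbf{P}=\mathbf{I}-\sum_{s'}\mathbf{b}_{s'}^{[i]}{\mathbf{b}_{s'}^{[i]}}^{\mathsf{H}}/\|\mathbf{b}_{s'}^{[i]}\|^2$, which is Hermitian, so your restriction loses nothing and in fact repairs an imprecision in the statement. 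Second, your parenthetical about the ``communications-theory normalization'' is doing real work: with $a_i,b_i\sim\mathcal{N}(0,\sigma^2/2)$ one gets $\mathbf{x}^{\mathsf{H}}\mathbf{A}\mathbf{x}/\sigma^2\sim\tfrac{1}{2}\chi^2_{2r}$ (equivalently a Gamma$(r,1)$ variable) rather than a standard $\chi^2_{2r}$; this factor-of-two convention should be tracked into the paper's Appendix D, where the incomplete-gamma expression $\gamma(S-s+1,\eta_D/2)$ presumes the standard real normalization. Neither point undermines your proof, which is complete and supplies what the paper only references.
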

\begin{proof}
See \cite{G_Seber03_Book}.
\end{proof}

In addition, the following lemma on the achievable rate of the
SE-ODIA will be used to show the achievability of optimal
multiuser diversity gain.
\begin{lemma}\label{lemma:effective_gain}
For the $j$-th selected user in the $i$-th cell, the achievable
rate is bounded by
\begin{align} \label{eq:data_rate_single_user4}
R^{[i,j]}\ge \log_2 \left( 1+ \frac{ \frac{\left\|
\mathbf{b}_{j}^{[i]} \right\|^2}{1+ \frac{(S-1)^4
\alpha^2}{1-(S-1)\alpha^2}} }{ \frac{S}{\mathsf{SNR}} +
\sum_{k\neq i}^{K} \sum_{s=1}^{S} \left|
{\mathbf{f}_{k}^{[i,j]}}^{\mathsf{H}}\mathbf{v}^{[k,s]}\right|^2 }
\right).
\end{align}
\end{lemma}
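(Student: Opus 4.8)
The plan is to reduce the lemma to a single scalar estimate and then carry out a zero-forcing power-loss analysis in the spirit of \cite{T_Yoo06_JSAC}. Starting from the per-user rate (\ref{eq:data_rate_single_user}), I observe that replacing the interference-minimizing scheduler of Section~\ref{sec:achievability} by the semiorthogonal user selection (SUS) of Section~\ref{SEC:Threhold_ODIA} leaves the inter-cell interference denominator $\frac{S}{\mathsf{SNR}}+\sum_{k\ne i}^{K}\sum_{s=1}^{S}|{\mathbf{f}_k^{[i,j]}}^{\mathsf{H}}\mathbf{v}^{[k,s]}|^2$ untouched; only the desired-signal gain $\gamma^{[i,j]}$ depends on which users are picked. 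Hence it suffices to prove the numerator bound $\gamma^{[i,j]}\ge\|\mathbf{b}_j^{[i]}\|^2/\big(1+\tfrac{(S-1)^4\alpha^2}{1-(S-1)\alpha^2}\big)$ and substitute it into (\ref{eq:data_rate_single_user}). Throughout I reindex the selected users in cell $i$ as $1,\dots,S$ and write $\mathbf{g}_s\triangleq\mathbf{f}_i^{[i,s]}$ (cf. (\ref{eq:effective_CH})), so that the ZF precoder (\ref{eq:ZF_BF}) reads $\mathbf{V}_i=\mathbf{F}_i^{-1}\boldsymbol{\Gamma}_i$ with $\mathbf{F}_i=[\mathbf{g}_1,\dots,\mathbf{g}_S]^{\mathsf{H}}$.

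Next I would make the desired gain explicit. Since $\mathbf{v}^{[i,j]}=\sqrt{\gamma^{[i,j]}}\,(\mathbf{F}_i^{-1})_{\cdot j}$ and $\mathbf{P}_i$ has orthonormal columns, the per-stream power normalization in (\ref{eq:ZF_BF}) yields the effective ZF gain $\gamma^{[i,j]}=1/\|(\mathbf{F}_i^{-1})_{\cdot j}\|^2$. The $j$-th column of $\mathbf{F}_i^{-1}$ is the unique vector satisfying $\mathbf{g}_s^{\mathsf{H}}(\mathbf{F}_i^{-1})_{\cdot j}=\delta_{sj}$, i.e. it is orthogonal to $\{\mathbf{g}_s:s\ne j\}$ with unit inner product on $\mathbf{g}_j$; a one-line minimum-norm projection argument then gives the standard identity $\gamma^{[i,j]}=\|\boldsymbol{\Pi}_j^\perp\mathbf{g}_j\|^2$, where $\boldsymbol{\Pi}_j^\perp$ denotes the orthogonal projector onto the complement of $\mathcal{V}\triangleq\mathrm{span}\{\mathbf{g}_s:s\ne j\}$. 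By the Gram--Schmidt step of SUS, $\mathbf{b}_j^{[i]}$ is precisely the component of $\mathbf{g}_j$ orthogonal to the earlier channels $\mathbf{g}_1,\dots,\mathbf{g}_{j-1}$, so $\mathbf{g}_j-\mathbf{b}_j^{[i]}\in\mathcal{V}$ is annihilated by $\boldsymbol{\Pi}_j^\perp$. Therefore $\gamma^{[i,j]}=\|\boldsymbol{\Pi}_j^\perp\mathbf{b}_j^{[i]}\|^2=\|\mathbf{b}_j^{[i]}\|^2-\|\boldsymbol{\Pi}_{\mathcal{V}}\mathbf{b}_j^{[i]}\|^2$, and the whole problem collapses to upper-bounding the loss term $\|\boldsymbol{\Pi}_{\mathcal{V}}\mathbf{b}_j^{[i]}\|^2$, the target being $\|\boldsymbol{\Pi}_{\mathcal{V}}\mathbf{b}_j^{[i]}\|^2\le \tfrac{(S-1)^4\alpha^2}{1-(S-1)\alpha^2}\,\gamma^{[i,j]}$, which rearranges to the claimed numerator bound.

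The crux is bounding this loss through the near-orthogonality enforced by SUS. The key observation is that every later-selected user $t>j$ satisfies $\pi(t)\in\mathcal{N}_t\subseteq\mathcal{N}_{j+1}$, and membership in $\mathcal{N}_{j+1}$ is exactly the pool-update constraint $|\mathbf{g}_t^{\mathsf{H}}\mathbf{b}_j^{[i]}|<\alpha\,\|\mathbf{g}_t\|\,\|\mathbf{b}_j^{[i]}\|$ of Step~4; thus $\mathbf{b}_j^{[i]}$ is almost orthogonal to all of $\mathbf{g}_{j+1},\dots,\mathbf{g}_S$ and exactly orthogonal to $\mathbf{g}_1,\dots,\mathbf{g}_{j-1}$. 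Since the latter are killed anyway, $\boldsymbol{\Pi}_{\mathcal{V}}\mathbf{b}_j^{[i]}$ lives in the span of the at most $S-1$ orthogonalized later directions. I would expand this projection in that (non-orthonormal) basis, expressing the coefficients via the inverse of the associated Gram matrix, whose normalized off-diagonal entries are $O(\alpha)$ and whose smallest eigenvalue is lower-bounded by $1-(S-1)\alpha^2$ through a Gershgorin estimate. Summing the $\le S-1$ interfering contributions and propagating them along the Gram--Schmidt recursion produces the factor $(S-1)^4$ in the numerator and the conditioning factor $1-(S-1)\alpha^2$ in the denominator, giving the stated loss bound; plugging it into (\ref{eq:data_rate_single_user}) completes the proof.

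The main obstacle is this last step: obtaining the sharp constant $\tfrac{(S-1)^4\alpha^2}{1-(S-1)\alpha^2}$ rather than a looser $O(\alpha^2)$ estimate. The delicacy is that the ZF direction must be orthogonal to both earlier and later selected channels, whereas SUS only directly controls the angle of $\mathbf{b}_j^{[i]}$ with the later channels; translating that chain of normalized inner-product bounds through the inverse Gram matrix and the Gram--Schmidt recursion — carefully tracking how the factor $(S-1)$ compounds and how $1-(S-1)\alpha^2$ enters as the worst-case conditioning — is where essentially all the bookkeeping lies. The rest, including the reduction in the first two paragraphs and the substitution into (\ref{eq:data_rate_single_user}), is routine linear algebra.
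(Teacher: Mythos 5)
Your reduction is exactly the one the paper relies on: the denominator of (\ref{eq:data_rate_single_user}) is the same quantity appearing in (\ref{eq:data_rate_single_user4}), so the lemma is equivalent to the single numerator bound $\gamma^{[i,j]} \ge \|\mathbf{b}_{j}^{[i]}\|^2\big/\big(1+\frac{(S-1)^4\alpha^2}{1-(S-1)\alpha^2}\big)$, and your chain of identities $\gamma^{[i,j]} = 1\big/\big[(\mathbf{F}_i\mathbf{F}_i^{\mathsf{H}})^{-1}\big]_{j,j} = \|\boldsymbol{\Pi}_j^{\perp}\mathbf{g}_j\|^2 = \|\mathbf{b}_{j}^{[i]}\|^2 - \|\boldsymbol{\Pi}_{\mathcal{V}}\mathbf{b}_{j}^{[i]}\|^2$ is correct (the paper's normalization $\gamma^{[i,j]}=1/\|\mathbf{P}_i\mathbf{v}^{[i,j]}\|$ should be read with a square, as both you and the paper's own proof implicitly do). Your two orthogonality observations --- exact orthogonality of $\mathbf{b}_{j}^{[i]}$ to the earlier selected channels, and $\alpha$-near-orthogonality to the later ones via $\pi(t)\in\mathcal{N}_t\subseteq\mathcal{N}_{j+1}$ --- are also the right ingredients.

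The gap is that the only nontrivial step, namely $\|\boldsymbol{\Pi}_{\mathcal{V}}\mathbf{b}_{j}^{[i]}\|^2 \le \frac{(S-1)^4\alpha^2}{1-(S-1)\alpha^2}\,\gamma^{[i,j]}$ with that \emph{specific} constant, is never actually carried out: you describe a Gram-matrix/Gershgorin plan, assert that it ``produces the factor $(S-1)^4$'' and the conditioning factor $1-(S-1)\alpha^2$, and then concede in your closing paragraph that essentially all the bookkeeping lies there. Since a loose $O(\alpha^2)$ estimate with an unspecified constant would not yield the stated inequality, what is deferred is the entire content of the lemma. The paper does not derive this bound either; it disposes of it in one line by citing \cite[Lemma 2]{T_Yoo06_JSAC}, whose conclusion under the semiorthogonal selection conditions is precisely $1/[(\mathbf{F}_i\mathbf{F}_i^{\mathsf{H}})^{-1}]_{j,j} > \|\mathbf{b}_{j}^{[i]}\|^2/\big(1+\frac{(S-1)^4\alpha^2}{1-(S-1)\alpha^2}\big)$. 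Your argument is therefore repaired immediately by replacing your third paragraph with that citation; if you want a self-contained proof, you must actually execute the inverse-Gram-matrix estimate and track the constants, which is where all the work resides.
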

\begin{proof}
Since the chosen channel vectors are not perfectly orthogonal,
there is degradation in the effective channel gain
$\gamma^{[i,j]}$. Specifically, for the $j$-th selected user in
the $i$-th cell, we have
\begin{align}\label{eq:gamma_j_bound}
\gamma^{[i,j]} &= \frac{1}{\left[
\left(\mathbf{F}_i\mathbf{F}_i^{\mathsf{H}}\right)^{-1}
\right]_{j,j}} > \frac{\left\| \mathbf{b}_{j}^{[i]} \right\|^2}{1+
\frac{(S-1)^4 \alpha^2}{1-(S-1)\alpha^2}},
\end{align}
which follows from \cite[Lemma 2]{T_Yoo06_JSAC}. Inserting
(\ref{eq:gamma_j_bound}) to the sum-rate lower bound in
(\ref{eq:data_rate_single_user}) proves the lemma.
\end{proof}

Now the following theorem establishes the achievability of the
optimal multiuser diversity gain.

\begin{theorem}\label{theorem:MUD}
The proposed SE-ODIA scheme with
\begin{equation} \label{eq:eta_D_choice}
\eta_D = \epsilon_D \log \mathsf{SNR}
\end{equation}
\begin{equation}\label{eq:eta_I_choice}
\eta_I = \epsilon_I\mathsf{SNR}^{-1}
\end{equation}
for any $\epsilon_D, \epsilon_I >0$ achieves the optimal multiuser
diversity gain given by
\begin{equation}
R^{[i,j]} = \Theta\left( \log \left( \mathsf{SNR}\cdot \log
N\right)\right),
\end{equation}
with high probability for all selected users in the high SNR
regime if
\begin{align} \label{eq:N_scaling_MUD}
N = \omega \left(
\mathsf{SNR}^{\frac{(K-1)S-L+1}{1-(\epsilon_D/2)}}\right).
\end{align}
\end{theorem}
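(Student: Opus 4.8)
The plan is to begin from the per-user rate bound of Lemma~\ref{lemma:effective_gain} and to argue that the two scheduling constraints $\mathsf{C}_1$ and $\mathsf{C}_2$ govern the denominator and the numerator of the effective SINR independently. First I would control the inter-cell interference under $\mathsf{C}_1$. Writing $\eta^{[i,j]}=\sum_{k\neq i}\|\mathbf{f}_k^{[i,j]}\|^2$ and applying Cauchy--Schwarz together with the definition of $\mathbf{v}^{(\max)}_i$,
\begin{equation}
\sum_{k=1,k\neq i}^{K}\sum_{s=1}^{S}\big|{\mathbf{f}_k^{[i,j]}}^{\mathsf{H}}\mathbf{v}^{[k,s]}\big|^2\le S\,\big\|\mathbf{v}^{(\max)}_i\big\|^2\,\eta^{[i,j]}\le S\,\big\|\mathbf{v}^{(\max)}_i\big\|^2\,\epsilon_I\,\mathsf{SNR}^{-1},
\end{equation}
so the whole denominator $S/\mathsf{SNR}+\sum_{k\neq i}\sum_s|\cdot|^2$ in Lemma~\ref{lemma:effective_gain} is $\Theta(\mathsf{SNR}^{-1})$. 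On the numerator side, $\mathsf{C}_2$ guarantees $\|\mathbf{b}_j^{[i]}\|^2\ge\eta_D=\epsilon_D\log\mathsf{SNR}$, while for fixed small $\alpha$ the factor $1+(S-1)^4\alpha^2/(1-(S-1)\alpha^2)$ is $\Theta(1)$. Inserting both into Lemma~\ref{lemma:effective_gain} gives $R^{[i,j]}\ge\log_2\!\big(1+\Theta(\mathsf{SNR}\cdot\log\mathsf{SNR})\big)$ for every selected user.

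Next I would pass from $\log\mathsf{SNR}$ to $\log N$. Under the hypothesis $N=\omega\!\big(\mathsf{SNR}^{((K-1)S-L+1)/(1-\epsilon_D/2)}\big)$, $N$ grows polynomially in $\mathsf{SNR}$, so $\log N=\Theta(\log\mathsf{SNR})$ and therefore $\eta_D=\epsilon_D\log\mathsf{SNR}=\Theta(\log N)$; the bound becomes $R^{[i,j]}\ge\Theta\!\big(\log(\mathsf{SNR}\cdot\log N)\big)$. For the reverse inequality, which upgrades $\Omega(\cdot)$ to $\Theta(\cdot)$ and certifies \emph{optimality} of the multiuser-diversity gain, I would invoke the standard extreme-value converse: the maximum of $N$ i.i.d.\ effective gains with an exponential tail is $O(\log N)$ with high probability, so no scheduler can exceed $\log(\mathsf{SNR}\cdot O(\log N))$ per stream. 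Together these yield $R^{[i,j]}=\Theta(\log(\mathsf{SNR}\cdot\log N))$, i.e.\ the full $\log\mathsf{SNR}$ DoF term plus the optimal $\log\log N$ diversity term.

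The crux, and the step I expect to be hardest, is \emph{feasibility}: proving that with high probability every cell can fill all $S$ streams with users meeting $\mathsf{C}_1$ and $\mathsf{C}_2$ simultaneously at every stage of the semiorthogonal selection. The enabling observation is independence. Since $\mathbf{u}^{[i,j]}$ is the last right-singular vector of the interference matrix $\mathbf{G}^{[i,j]}$, the metric $\eta^{[i,j]}={\sigma^{[i,j]}_L}^2$ depends only on the interfering channels; yet, conditioned on $\mathbf{u}^{[i,j]}$, the desired effective vector $\mathbf{f}_i^{[i,j]}=\mathbf{P}_i^{\mathsf{H}}{\mathbf{H}_i^{[i,j]}}^{\mathsf{H}}\mathbf{u}^{[i,j]}$ is $\mathcal{CN}(0,\mathbf{I}_S)$ (using $\mathbf{P}_i^{\mathsf{H}}\mathbf{P}_i=\mathbf{I}_S$) and is independent of those channels. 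Hence $\mathsf{C}_1$ and $\mathsf{C}_2$ are independent for a fixed user, and I may multiply their probabilities. By Lemma~\ref{lemma:F_phi}, $\Pr[\mathsf{C}_1]=F_\eta(\eta_I)\sim c_0\big(\epsilon_I\mathsf{SNR}^{-1}\big)^{(K-1)S-L+1}$; by Lemma~\ref{lemma:quadratic}, $\|\tilde{\mathbf{b}}_s^{[i,j]}\|^2$ is chi-squared after projection onto the complement of the previously chosen directions, so $\Pr[\mathsf{C}_2]=\Pr[\|\tilde{\mathbf{b}}_s^{[i,j]}\|^2\ge\eta_D]$ has an exponential tail in $\eta_D$. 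Combining with the pool-size estimate $|\mathcal{N}_s|\gtrsim N\alpha^{2(S-1)}$ of Lemma~\ref{lemma:N_card}, the expected number of qualifying users in $\mathcal{N}_s$ is $\Theta(N)\cdot\Pr[\mathsf{C}_1]\,\Pr[\mathsf{C}_2]$.

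This mean must diverge for every stage at once, and here the three competing effects interact delicately: the interference threshold $\eta_I=\epsilon_I\mathsf{SNR}^{-1}$ lowers the $\mathsf{SNR}$-exponent of $\Pr[\mathsf{C}_1]$ by $(K-1)S-L+1$, the gain threshold $\eta_D=\epsilon_D\log\mathsf{SNR}$ discounts $\Pr[\mathsf{C}_2]$ by a further power of $\mathsf{SNR}$, and the semiorthogonal pre-filtering costs the fixed factor $\alpha^{2(S-1)}$. The scaling exponent $((K-1)S-L+1)/(1-\epsilon_D/2)$ is exactly the calibration that, after this coupling is taken into account, keeps $\Theta(N)\cdot\Pr[\mathsf{C}_1]\Pr[\mathsf{C}_2]\to\infty$. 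With the mean diverging I would close the argument by a concentration bound (a second-moment/Chebyshev estimate, or a Poisson approximation for the count of qualifying users) to obtain at least one admissible user per stage with probability tending to one, followed by a union bound over the $S$ stages and the $K$ cells. I anticipate the most error-prone part to be the stage-wise bookkeeping---verifying that the chi-squared law for $\|\tilde{\mathbf{b}}_s^{[i,j]}\|^2$ and its independence from $\mathsf{C}_1$ genuinely survive the orthogonality pre-filter that defines $\mathcal{N}_s$, so that the product form of the probability remains valid throughout the recursion.
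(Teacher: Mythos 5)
Your proposal is correct and follows essentially the same route as the paper: feasibility of each selection stage is established via the independent product $\Pr\{\mathsf{C}_1\}\Pr\{\mathsf{C}_2\}$ over the pool of size $\gtrsim N\alpha^{2(S-1)}$ from Lemma~\ref{lemma:N_card}, using Lemma~\ref{lemma:F_phi} for the $\eta$-tail and the chi-squared law of Lemma~\ref{lemma:quadratic}, and the rate bound then follows from Lemma~\ref{lemma:effective_gain} with the denominator controlled by $\eta_I$ and the numerator by $\eta_D$. The only cosmetic differences are that the paper computes $1-(1-p)^{|\mathcal{N}_s|}$ directly rather than via a second-moment/Poisson count, and it does not spell out the extreme-value converse you add for the upper half of the $\Theta(\cdot)$.
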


\begin{proof}
Amongst $|\mathcal{N}_s|$ users, there should exist at least one
user satisfying the conditions $\mathsf{C}_1$ and $\mathsf{C}_2$
to make the proposed user scheduling for the SE-ODIA valid. Thus,
we first show the probability that there exist at least one valid
user, denoted by $\mathsf{p}_s$, converges to 1, for the $s$-th
user selection, if $N$ scales according to
(\ref{eq:N_scaling_MUD}) with the choices (\ref{eq:eta_D_choice})
and (\ref{eq:eta_I_choice}).

The probability that each user satisfies the two conditions is
given by $\textrm{Pr} \left\{ \mathsf{C}_1  \right\} \cdot
\textrm{Pr} \left\{ \mathsf{C}_2  \right\}$, because the two
conditions are independent of each other. Consequently,
$\mathsf{p}_{s}$ is given by
\begin{align}
\mathsf{p}_s &= 1- \left( 1- \textrm{Pr} \left\{ \mathsf{C}_1  \right\} \cdot \textrm{Pr} \left\{ \mathsf{C}_2  \right\} \right)^{|\mathcal{N}_s|} \\
\label{eq:p_s2}& \gtrsim 1- \left( 1- \textrm{Pr} \left\{
\mathsf{C}_1  \right\} \cdot \textrm{Pr} \left\{ \mathsf{C}_2
\right\} \right)^{N \cdot \alpha^{2(S-1)}}.
\end{align}
Note that each element of ${\mathbf{f}_{i}^{[i,j]}}^{\mathsf{H}} =
{\mathbf{u}^{[i,j]}}^{\mathsf{H}} \mathbf{H}^{[i,j]}_{i}
\mathbf{P}_i$ is i.i.d. according to $\mathcal{CN}(0,1)$, because
\textcolor{black}{$\mathbf{P}_k$ is independently and randomly
chosen orthonormal basis for an $S$-dimensional subspace of
$\mathbb{C}^{M \times M}$} and because
${\mathbf{u}^{[i,j]}}^{\mathsf{H}}$ is designed independently of
$\mathbf{H}_i^{[i,j]}$ and isotropically distributed over a unit
sphere. Thus, ${\mathbf{f}_{i}^{[i,j]}}^{\mathsf{H}} =
{\mathbf{u}^{[i,j]}}^{\mathsf{H}} \mathbf{H}^{[i,j]}_{i}
\mathbf{P}_i$ has its element i.i.d. according to
$\mathcal{CN}(0,1)$.

Let us define $\mathbf{P}$ by
\begin{align}
\mathbf{P} &\triangleq \left( \mathbf{I} - \sum_{s'=0}^{s-1}
\frac{ \mathbf{b}_{s'}^{[i]}{\mathbf{b}_{s'}^{[i]}}^{\mathsf{H}}}{
\|\mathbf{b}_{s'}^{[i]}\|^2}\right),
\end{align}
which is a symmetric idempotent matrix with rank $(S-s+1)$. Since
$\mathbf{b}_{s}^{[i]} = \mathbf{P}\mathbf{f}_{i}^{[i,j]}$, from
Lemma \ref{lemma:quadratic},
$\left\|\mathbf{b}_{s}^{[i]}\right\|^2$ is a Chi-squared random
variable with $2(S-s+1)$ degrees-of-freedom.

In Appendix \ref{app:MUD}, for $\eta_D>2$, we show that
\begin{equation} \label{eq:ps_conv}
\lim_{\mathsf{SNR}\rightarrow \infty} \mathsf{p}_s = 1,
\hspace{10pt} \textrm{if } N = \omega \left(
\mathsf{SNR}^{\frac{(K-1)S-L+1}{1-(\epsilon_D/2)}}\right).
\end{equation}

Now, given that there always exist at least one user that
satisfies the conditions $\mathsf{C}_1$ and $\mathsf{C}_2$, the
achievable sum-rate can be bounded from Lemma
\ref{lemma:effective_gain} by
\begin{align} \label{eq:data_rate_single_user5}
R^{[i,j]}&\ge \log_2 \left( 1+ \frac{ \frac{\left\| \mathbf{b}_{j}^{[i]} \right\|^2}{1+ \frac{(S-1)^4 \alpha^2}{1-(S-1)\alpha^2}} \cdot \frac{1}{\left\|\mathbf{v}_i^{\max} \right\|^2}}{ \frac{S}{\mathsf{SNR}\left\|\mathbf{v}_i^{\max} \right\|^2}+ \sum_{k\neq i}^{K} \sum_{s=1}^{S} \left\| \mathbf{f}_{k}^{[i,j]}\right\|^2 } \right)\\
\label{eq:data_rate_single_user6}& \ge \log_2 \left( 1+ \frac{ \frac{\left\| \mathbf{b}_{j}^{[i]} \right\|^2}{1+ \frac{(S-1)^4 \alpha^2}{1-(S-1)\alpha^2}} \cdot \mathsf{SNR}/\left\|\mathbf{v}_i^{\max} \right\|^2}{ S/\left\|\mathbf{v}_i^{\max} \right\|^2+ KS\epsilon_I} \right)\\
\label{eq:data_rate_single_user7}& = \log_2 \left( 1+ \left\| \mathbf{b}_{j}^{[i]} \right\|^2\mathsf{SNR} \cdot \xi \right)\\
\label{eq:data_rate_single_user8}& \ge \log_2 \left( 1+
\epsilon_D(\log N)\cdot \mathsf{SNR} \right),
\end{align}
where (\ref{eq:data_rate_single_user6}) follows from  the fact
that the sum-interference for all selected users, given by
$\sum_{j=1}^{S}\sum_{i=1}^{K} \eta^{[i,j]}\mathsf{SNR}$ (See
(\ref{eq:sum_interference_equiv})), does not exceed $KS \epsilon_I
$ by choosing $\eta_I = \epsilon_I \mathsf{SNR}^{-1}$.
Furthermore, $\xi$ is a constant given by
\begin{equation}
\xi = \frac{1}{\left\| \mathbf{v}_i^{\max}\right\|^2 \left( 1+
\frac{(S-1)^4
\alpha^2}{1-(S-1)\alpha^2}\right)\left(S/\left\|\mathbf{v}_i^{\max}
\right\|^2+ KS\epsilon_I\right)},
\end{equation}
and (\ref{eq:data_rate_single_user8}) follows from $\|
\mathbf{b}_{j}^{[i]} \|^2 \ge \eta_D  = \epsilon_D \log N$.
Therefore, the proposed SE-ODIA achieves the optimal multiuser
diversity gain $\log\log N$ in the high SNR regime, if $N = \omega
\left( \mathsf{SNR}^{\frac{(K-1)S-L+1}{1-(\epsilon_D/2)}}\right)$.
\end{proof}

Therefore, the optimal multiuser gain of $\log\log N$ is achieved
using the proposed SE-ODIA with the choices of
(\ref{eq:eta_D_choice}) and (\ref{eq:eta_I_choice}).
\label{line:SE_ODIA:start} Note that since small $\epsilon_D$
suffices to obtain the optimal multiuser gain, the condition on
$N$ does not dramatically change compared with that required to
achieve $KS$ DoF (See Theorem \ref{theorem:DoF}). Thus,
surprisingly, this means a slight increase in user scaling results
in optimal multiuser diversity by using the proposed SE-ODIA.
\label{line:SE_ODIA:end} Combining the results in Theorem
\ref{theorem:DoF} and \ref{theorem:MUD}, we can conclude the
achievability of the optimal DoF and multiuser gain as follows.
\begin{remark}
In fact, the ODIA described in Section \ref{SEC:OIA} can be
implemented using the SE-ODIA approach by choosing $\eta_D = 0$,
$\alpha = 1$, and $\eta_I^{[i]} = \min\left\{ \eta^{[i,1]},
\ldots, \eta^{[i,N]}\right\}$, where $\eta_I^{[i]}$ denotes
$\eta_I$ at the $i$-th cell. In summary, the optimal $KM$ DoF and
optimal multiuser gain of $\log \log N$ can be achieved using the
proposed ODIA framework, if the number of users per cell increases
according to $N = \omega\left(
\mathsf{SNR}^{\frac{(K-1)M-L+1}{1-(\epsilon_D/2)}}\right)$ for any
$\epsilon_D>0$.
\end{remark}

\section{Numerical Results} \label{SEC:Sim}

In this section, we compare the performance of the proposed ODIA
with two conventional schemes which also utilize the multi-cell
random beamforming technique at BSs. First, we consider ``max-SNR"
technique, in which each user designs the receive beamforming
vector in the sense to maximize the desired signal power, and
feeds back the maximized signal power to the corresponding BS.
Each BS selects $S$ users who have higher received signal power.
Second, ``min-INR" technique is considered, in which each user
performs receive beamforming in order to minimize the sum of
inter-cell interference and intra-cell
interference\cite{H_Nguyen13_arXiv, J_Lee13_arXiv}. Hence,
intra-cell interference does not vanish at users, while the
proposed ODIA perfectly eliminates it via transmit beamforming.
Specifically, from (\ref{eq:rec_vector_after_BF}), the $j$-th user
in the $i$-th cell should calculate the following $S$ scheduling
metrics
 \begin{align}
\eta^{[i,j]}_{\textrm{min-INR}, m} &=
\underbrace{\left\|{\mathbf{u}^{[i,j],m}}^{\mathsf{H}}
\mathbf{H}_i^{[i,j]}\tilde{\mathbf{P}}_{i,m}
\right\|^2}_{\textrm{intra-cell interference}} \nonumber  \\ &+
\underbrace{\sum_{k=1, k\neq
i}^{K}\left\|{\mathbf{u}^{[i,j],m}}^{\mathsf{H}}
\mathbf{H}_k^{[i,j]}\mathbf{P}_{k} \right\|^2}_{\textrm{inter-cell
interference}}, \,\, m=1, \ldots, S,
\end{align}
where $\tilde{\mathbf{P}}_{i,m} \triangleq \left[ \mathbf{p}_{1,
i}, \ldots, \mathbf{p}_{m-1,i}, \mathbf{p}_{m+1, i}, \ldots,
\mathbf{p}_{S,i} \right]$. For each $m$, the receive beamforming
vector $\mathbf{u}^{[i,j],m}$ is assumed to be designed such that
$\eta^{[i,j]}_{\textrm{min-INR}, m}$ is minimized. Each user
feedbacks $S$ scheduling metrics to the corresponding BS, and the
BS selects the user having the minimum scheduling metric for the
$m$-th spatial stream, $m=1, \ldots, S$. For more details about
the min-INR scheme, refer to \cite{H_Nguyen13_arXiv,
J_Lee13_arXiv}. \\

\label{line:Fig_Int:start} Fig. \ref{fig:Interf_N} shows the
sum-interference at all users for varying number of users per
cell, $N$, when $K=3$, $M=4$, $L=2$, and SNR=$20$dB. The solid
lines are obtained from Theorem \ref{theorem:scaling_decay} with
proper biases, and thus only the slopes of the solid lines are
relevant. The decaying rates of sum-interference of the proposed
ODIA are higher than those of the min-INR scheme since intra-cell
interference is perfectly eliminated in the proposed ODIA. In
addition, the interference decaying rates of the proposed ODIA are
consistent with the theoretical results of Theorem
\ref{theorem:scaling_decay}, which proves that the user scaling
condition derived in Theorem \ref{theorem:DoF} and the
interference bound in Theorem \ref{theorem:scaling_decay} are in
fact accurate and tight. \label{line:Fig_Int:end}

\begin{figure}[t]
\begin{center}
  \includegraphics[width=0.64\textwidth, angle=-0]{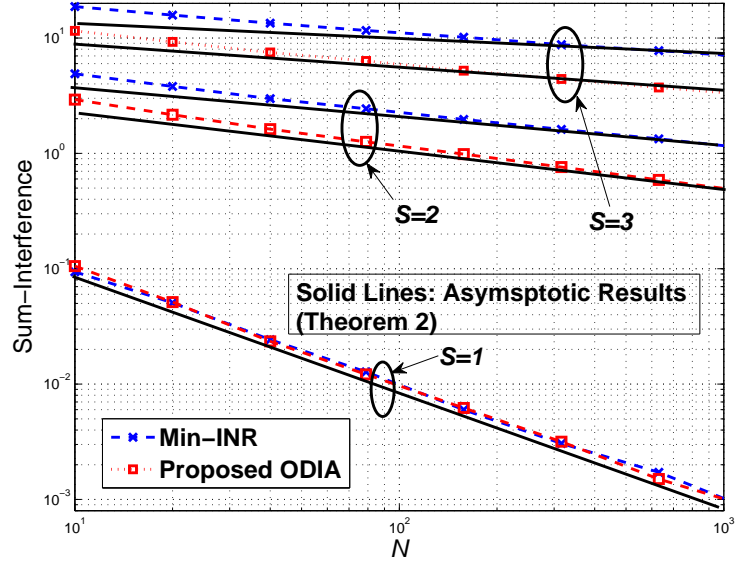}\\
  \caption{Normalized sum-interference vs. $N$ when $K=3$, $M=4$, $L=2$.}\label{fig:Interf_N}
  \end{center}
\end{figure}

\label{line:Fig_varyingN:start}
Fig.~\ref{fig:rate_SNR_varying_nf_N} shows the sum-rate vs. SNR
when $K=2$, $M=3$, $L=2$, and $S=2$. Thus, the total achievable
DoF is $KS=4$. Here, to comply with Theorems \ref{theorem:DoF} and
\ref{th:codebook}, $N$ and $n_f$ are assumed to scale with respect
to SNR as $N=\textrm{SNR}^{(K-1)S-L+1} = \textrm{SNR}^{1}$ and
$n_f = \log_2 \textrm{SNR}$, respectively.
\label{line:Fig_varyingN2:start} For an upper bound, the
genie-aided interference-free ODIA scheme is plotted as
`Interference-Free' in which both the intra- and inter-cell
interference was removed in the achievable rate calculation of the
ODIA scheme.\label{line:Fig_varyingN2:end} It is seen that the
proposed ODIA achieves the target DoF of 4 with
$N=\textrm{SNR}^{(K-1)-L+1}$, which again proves Theorem
\ref{theorem:DoF}. In addition, the ODIA with limited
feedback~(ODIA-LF) also achieves the target DoF of 4 for both
random and Grassmannian codebooks with $n_f = \log_2
(\textrm{SNR})$, which verifies Theorem \ref{th:codebook}. The
Max-SNR scheme achieves zero DoF, since the interference is not
suppressed at all for increasing SNR. The Min-INR scheme cannot
achieve the target DoF, since the user scaling is not fast enough
to satisfy $N=\textrm{SNR}^{KS-L}=\textrm{SNR}^{2}$ (See Section
\ref{subsec:DoF_comparison}).\label{line:Fig_varyingN:end}

\begin{figure}
\begin{center}
  \includegraphics[width=0.65\textwidth]{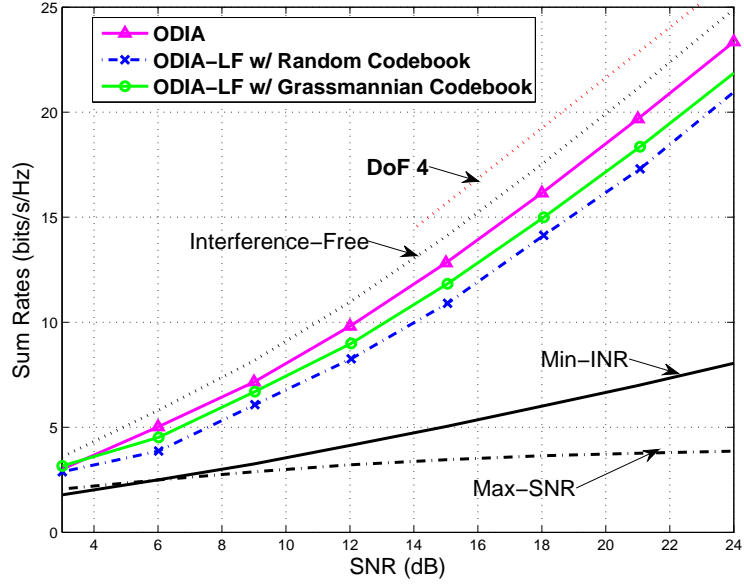}\\
  \caption{Sum-rates versus SNR when $K=2$, $M=3$, $L=2$, $S=2$. The total achievable DoF for all cells is $KS=4$. $N$ and $n_f$ change according to the SNR as $N=\textrm{SNR}^{(K-1)-L+1} = \textrm{SNR}^{1}$ and $n_f = \log_2 \textrm{SNR}$, respectively.}\label{fig:rate_SNR_varying_nf_N}
  \end{center}
\end{figure}

To evaluate the sum-rates of the SE-ODIA, the parameters $\eta_I$,
$\eta_D$, and $\alpha$ need to be optimized for the SE-ODIA. Fig.
\ref{fig:rates_eta} shows the sum-rate performance of the proposed
SE-ODIA for varying $\eta_I$ or $\eta_D$ with two different
$\alpha$ values when $K=3$, $M=4$, $L=2$, $S=2$, and $N=20$. To
obtain the sum-rate according to $\eta_I$, $\eta_D$ was fixed to
$1$. Similarly, for the sum-rate according to $\eta_D$, $\eta_I$
was fixed to $1$. If $\eta_I$ is too small, then there may not be
eligible users that satisfy the conditions $\mathsf{C}_1$ and
$\mathsf{C}_2$ in (\ref{eq:C}). Thus, \textit{scheduling
outage}~\footnote{It indicates the situation that there are no
users who are eligible for scheduling.} can occur frequently and
the achievable sum-rate becomes low. On the other hand, if
$\eta_I$ is too large, then the received interference at users may
not be sufficiently suppressed. Thus, the achievable sum-rate
converges to that of the system without interference suppression.
Similarly, if $\eta_D$ is too large, then the scheduling outage
occurs; and if $\eta_D$ is too small, then desired channel gains
cannot be improved. The orthogonality parameter $\alpha$ plays a
similar role; if $\alpha$ is too small, the cardinality of the
user pool $|\mathcal{N}_s|$ often becomes smaller than $S$, and
scheduling outage happens frequently. If $\alpha$ is too large,
then the orthogonality of the effective channel vectors of the
selected users is not taken into account for scheduling. In short,
the parameters $\eta_I$, $\eta_D$, and $\alpha$ need to be
carefully chosen to improve the performance of the proposed
SE-ODIA. In subsequent sum-rate simulations, proper sets of
$\eta_I$, $\eta_D$, and $\alpha$ were numerically found for
various $N$ and SNR values and applied to the SE-ODIA.
\begin{figure}[t]
\begin{center}
  \includegraphics[width=0.63\textwidth, angle=-0]{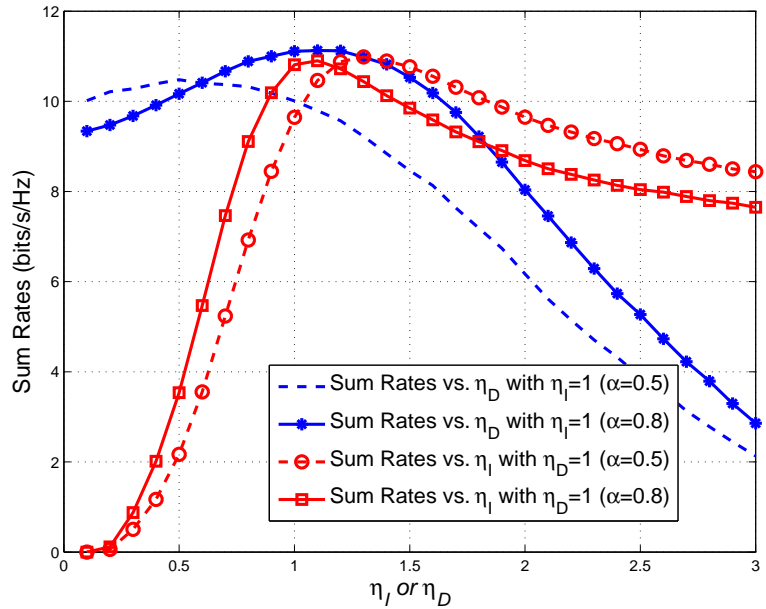}\\
  \caption{Sum-rates of SE-ODIA vs. $\eta_D$ or $\eta_I$ when $K=3$, $M=4$, $L=2$, $S=2$, and $N=20$.}\label{fig:rates_eta}
  \end{center}
\end{figure}
For instance, optimal $(\eta_I, \eta_D, \alpha)$ values that
maximize the sum-rate for a few cases are provided in Table
\ref{table:param}. It is seen that in the noise-limited low SNR
regime, large $\eta_D$ helps, whereas in the interference-limited
high SNR regime, small $\eta_I$ improves the sum-rate. On the
other hand, as $N$ increases, interference can be suppressed by
choosing smaller $\eta_I$ values.
\begin{table} \caption{Optimized parameters $(\eta_I, \eta_D, \alpha)$ for different SNRs and $N$ values}  \label{table:param}
\begin{center}
\begin{tabular}{|c|c|c|}
  \hline
   & $N$=20 & $N$=50 \\ \hline
  SNR=3dB & (2.5, 2.5, 0.8) & (2, 2.5, 0.8) \\ \hline
  SNR=21dB &  (1.5, 2, 0.8) & (1, 2, 0.8) \\
  \hline
\end{tabular}
\end{center}
\end{table}

Fig. \ref{fig:rates_SNR_N20} shows the sum-rates for varying SNR
values when $K=3$, $M=4$, $L=2$, $S=2$, and $N=20$.
 In the noise-limited low SNR regime, the sum-rate of the min-INR scheme is even lower than that of the max-SNR scheme, because $N$ is not large enough to suppress both intra- and inter-cell interference. For comparison, the sum-rate maximizing iterative transceiver design of \cite{Q_Shi11_TSP} is also evaluated allowing one iteration between the BSs and users, i.e., the users feed back their receive beamforming vectors and BSs update their precoding matrices once. Even with one iteration, since each user needs to feed back the information of the receive beamformer to all the BSs in the network, the amount of the feedback is $K$ times more than in the proposed scheme. In addition, because \cite{Q_Shi11_TSP} does not include any consideration of user scheduling, which is in general difficult to be separated from the precoding matrix design, we applied the conventional max-SNR and max-SINR scheduling schemes for the scheme of \cite{Q_Shi11_TSP}, which are labeled by `Max-Sum-Rate w/ Max-SNR Scheduling' and `Max-Sum-Rate w/ Max-SINR Scheduling,' respectively. The precoding matrix was fixed to be the one achieving the max-SNR in the scheduling metric calculation of \cite{Q_Shi11_TSP}, e.g., the scheduling metric for the max-SNR scheme is given by $\mathsf{SNR}\cdot {\lambda_i^{[i,j]}}^2$, where $\lambda_i^{[i,j]}$ is the largest singular value of $\mathbf{H}_i^{[i,j]}$.

 It is seen from the figure that the proposed ODIA outperforms the conventional schemes for SNRs larger than 3dB due to the combined effort of 1) transmit beamforming perfectly eliminating intra-cell interference and 2) receive beamforming effectively reducing inter-cell interference. In particular, the proposed ODIA shows higher sum-rate than the iterative transceiver design even with $K$ times less feedback due to the separate joint optimization of the precoding matrix design and user scheduling.

 The sum-rate performance of the ODIA-LF improves as $n_f$ increases as expected. In practice, $n_f=6$ exhibits a good compromise between the number of feedback bits and sum-rate performance for the codebook dimension of 2 (i.e., $S=2$).
On the other hand, the proposed SE-ODIA achieves higher sum-rates
than the others including the ODIA for all SNR regime, because the
SE-ODIA improves desired channel gains and suppresses interference
simultaneously. Note however that the SE-ODIA includes the
optimization on the parameters for given SNR and $N$ and requires
the user scheduling method based on perfect CSI feedback, which
demands higher computational complexity than the user scheduling
of the ODIA.


\begin{figure}
\begin{center}
\includegraphics[width=0.64\textwidth, angle=0]{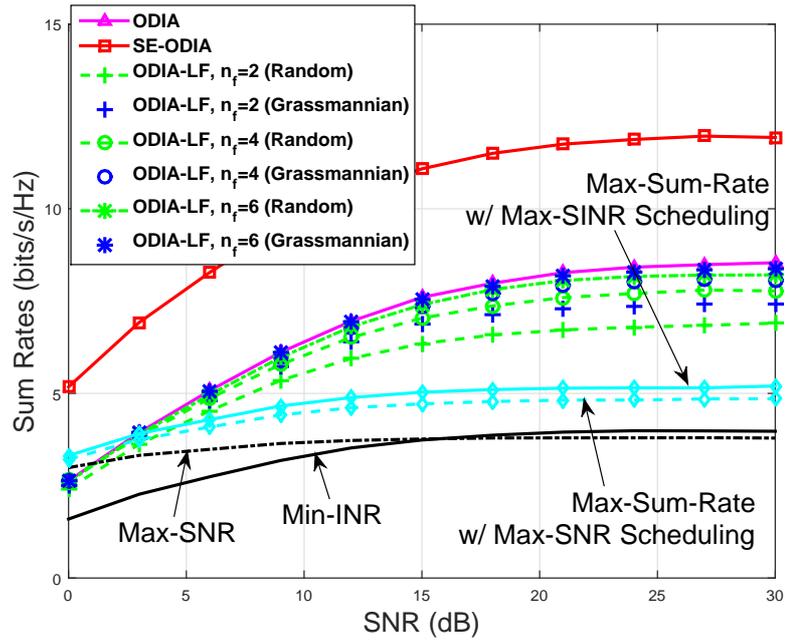}
  \caption{Sum-rates versus SNR when $K=3$, $M=4$, $L=2$, $S=2$, and $N=20$.} \label{fig:rates_SNR_N20}
\end{center}
\end{figure}

Fig. \ref{fig:rates_N_linear} shows the sum-rate performance of
the proposed ODIA schemes for varying number of users per cell,
$N$, when $K=3$, $M=4$, $L=2$, $S=2$, and SNR=$20$dB.
  For limited feedback, the Grassmannian codebook was employed. The sum-rates of the proposed ODIA schemes increase faster than the two conventional schemes, which implies that the user scaling conditions of the proposed ODIA schemes required for a given DoF or MUD gain are lowered than the conventional schemes, as shown in Theorems \ref{theorem:DoF} and \ref{theorem:MUD}.
\begin{figure}
\begin{center}
  \includegraphics[width=0.64\textwidth, angle=-0]{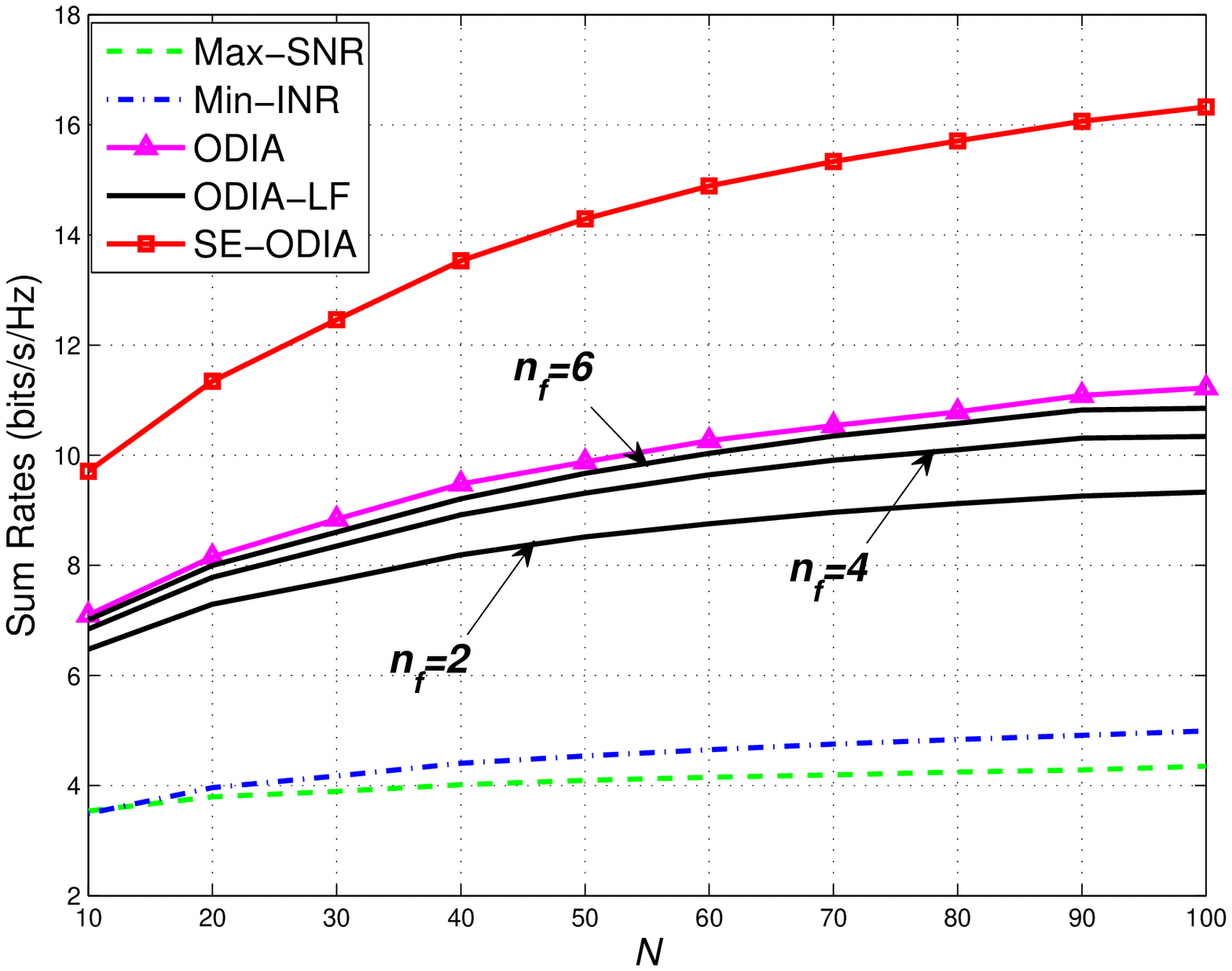}\\
  \caption{Sum-rates vs. $N$ when $K=3$, $M=4$, $L=2$, $S=2$, and SNR=20dB.}\label{fig:rates_N_linear}
  \end{center}
\end{figure}

\section{Conclusion} \label{SEC:Conc}
In this paper, we proposed an opportunistic downlink interference
alignment (ODIA) which intelligently combines user scheduling,
transmit beamforming, and receive beamforming for multi-cell
downlink networks. In the ODIA, the optimal DoF can be achieved
with more relaxed user scaling condition $N=\left(
\mathsf{SNR}^{(K-1)S-L+1}\right)$. To the best of our knowledge,
this user scaling condition is the best known to date. We also
considered a limited feedback approach for the ODIA, and analyzed
the minimum number of feedback bits required to achieve the same
user scaling condition of the ODIA with perfect feedback. We found
that both Grassmannian and random codebooks yield the same
condition on the number of required feedback bits. Finally, a
spectrally efficient ODIA (SE-ODIA) was proposed to further
improve the sum-rate of the ODIA, in which optimal multiuser
diversity can be achieved even in the presence of inter-cell
interference. Through numerical results, it was shown that the
proposed ODIA schemes significantly outperform the conventional
interference management schemes  in practical environments.

\appendices
\section{Proof of Lemma \ref{lemma:CDF_scaling}}\label{app:lemma2}
Using (\ref{eq:sum_interference_equiv}), $\mathcal{P}$ can be
bounded by
\begin{align}
\label{eq:P_LB0}\mathcal{P} & =\lim_{\textsf{SNR}\rightarrow \infty} \textrm{Pr} \left\{\sum_{i=1}^{K}\sum_{j=1}^{S} \eta^{[i,j]}\mathsf{SNR} \le \epsilon \right\} \\
&\ge \lim_{\textsf{SNR}\rightarrow \infty} \textrm{Pr}
\left\{\eta^{[i,j]}\le \frac{\textsf{SNR}^{-1}\epsilon}{KS^2},
\forall i\in \mathcal{K}, \forall j\in
\mathcal{S}\right\}.\label{eq:P_BF_LB1}
\end{align}
Note that the selected users' $\eta^{[i,j]}$ are the minimum $S$ values out of $N$ i.i.d. random variables. 
Since the CDF of $\eta^{[i,j]}$ is given by (\ref{eq:F_phi}),
(\ref{eq:P_BF_LB1}) can be written by
\begin{align}
\label{eq:P_LB0}\mathcal{P} & \ge\lim_{\textsf{SNR}\rightarrow
\infty} \Bigg[ 1- \sum_{i=1}^{S-1}\left( \begin{array}{c}
 N \\
 i
\end{array}
\right)\bigg(\underbrace{F_{\eta}\left(\frac{\epsilon\textsf{SNR}^{-1}}{KS^2}\right)}_{\triangleq A}\bigg)^{i} \nonumber \\ & \quad \quad \cdot \left(1-F_{\eta}\left(\frac{\epsilon\textsf{SNR}^{-1}}{KS^2}\right)\right)^{N-i} \Bigg]\\
\label{eq:P_LB_last}& \ge\lim_{\textsf{SNR}\rightarrow \infty}
\left[ 1- \sum_{i=1}^{S-1}N^i A^{i}\left(1-A\right)^{-i}
\left(1-A\right)^{N}\right],
\end{align}
where
\begin{align}\label{eq:P_LB_lastterm}
\left(1-A\right)^{N}& \!\!\! = \!\! \bigg( 1- c_0 \left(
\frac{\epsilon}{KS^2}\right)^{(K-1)S-L+1} \!\!\! \cdot
\mathsf{SNR}^{-((K-1)S-L+1)} \nonumber \\ & -
\Omega_{\textsf{SNR}}\left(\mathsf{SNR}^{-((K-1)S-L)}\bigg)
\right)^{N}.
\end{align}
Here, $f(x) = \Omega_x\left( g(x) \right)$ means $\lim_{x
\rightarrow \infty} \left|\frac{f(x)}{g(x)}\right|>0$. Thus,
$\left(1-A\right)^{N}$ tends to 0 (exponentially) if and only if
$N$ scales faster than $\mathsf{SNR}^{(K-1)S-L+1}$. Now, inserting
$N=\omega\left( \mathsf{SNR}^{(K-1)S-L+1}\right)$ to
(\ref{eq:P_LB_last}) yields $\mathcal{P}$ tending to 1 for
increasing SNR, because for given $i$, $\left(1-A\right)^{N}$
vanishes exponentially.

\section{Proof of Theorem \ref{th:codebook}} \label{app:th_codebook}
i) Grassmannian codebook \\
For the Grassmannian codebook, the chordal distance between any
two codewords is the same, i.e., $\sqrt{1-\left|
\mathbf{c}_i^{\mathsf{H}} \mathbf{c}_j\right|^2} = d,$, $\forall i
\neq j$. The Rankin, Gilbert-Varshamov, and Hamming bounds on the
chordal distance give us \cite{A_Barg02_TIT,J_Conway96_EM,
W_Dai08_TIT}
\begin{equation} \label{eq:dmin_bound}
{d^{[i,j]}}^2 \le  \min \left\{\frac{1}{2},
\frac{(S-1)N_f}{2S(N_f-1)}, \left(
\frac{1}{N_f}\right)^{1/(S-1)}\right\}.
\end{equation}
The bound in (\ref{eq:dmin_bound}) is reduced to the third bound
as $N_f$ increases, thus providing arbitrarily tight upper-bound
on ${d^{[i,j]}}^2$. Thus,  the first term of (\ref{eq:Delta})
remains constant if
\begin{align}
\left(d^{\max}_i\right)^2\delta_1\cdot \mathsf{SNR} & \le \left(
\frac{1}{N_f}\right)^{1/(S-1)}\delta_1\cdot \mathsf{SNR}\le
\epsilon'.
\end{align}
This is reduced to $N_f^{-1/(S-1)} \le \epsilon' \delta_1^{-1}
\mathsf{SNR}^{-1}$,
 or equivalently (\ref{eq:nf_cond0}).
Now, if (\ref{eq:nf_cond0}) holds true, $d_i^{\max}$ tends to be
arbitrarily small as SNR increases, and thus the second term of
(\ref{eq:Delta}) is dominated by the first term.
Therefore, if $n_f$ scales with respect to $\log_2(\mathsf{SNR})$
as (\ref{eq:nf_cond0}), the residual intra-cell interference
$\Delta^{[i,j]}$ remains constant.

ii) Random codebook \\
In a random codebook, each codeword $\mathbf{c}_k$ is chosen
isotropically and independently from the $L$-dimensional hyper
sphere, and thus the maximum chordal distance of a random codebook
is unbounded. Since ${d^{[i,j]}}^2$ is the minimum of $N_f$
chordal distances resulting from $N_f$ independent codewords, the
CDF of ${d^{[i,j]}}^2$ is given by
\cite{C_AuYeung07_TWC,N_Jindal06_TIT}
\begin{equation} \label{eq:F_d_def}
F_d(z) \triangleq \textrm{Pr}\left\{{d^{[i,j]}}^2\le z\right\} =
1-\left(1-z^{S-1}\right)^{N_f}.
\end{equation}

From (\ref{eq:Delta}), the second term of (\ref{eq:P_def3}) can be
bounded by
\begin{align} \label{eq:PPP}
&\textrm{Pr} \Bigg\{\Delta^{[i,j]} \le \epsilon', \forall i\in \mathcal{K}, j\in \mathcal{S}  \Bigg\} \nonumber \\
& \ge \textrm{Pr}\left\{\left(d^{\max}_i\right)^2\delta_1\cdot
\mathsf{SNR}\le \epsilon', \forall i\in \mathcal{K}\right\}
\nonumber \\  & \quad \quad \cdot \textrm{Pr} \left\{
\sum_{k=2}^{\infty}  \left(d^{\max}_i\right)^{2k}\delta_k \cdot
\mathsf{SNR} \le \epsilon', \forall i\in \mathcal{K} \right\}.
\end{align}
Subsequently, we have
\begin{align} \label{eq:RV_P1}
&\textrm{Pr} \!\! \left\{\left(d^{\max}_i\right)^2\delta_1\cdot
\mathsf{SNR}\le \epsilon'\right\} \!\! = \!\! \prod_{k=1}^{S} \!\!
\textrm{Pr} \!  \left\{\left(d^{[k,i]}\right)^2 \!\! \delta_1 \!
\cdot \! \mathsf{SNR}\le \epsilon'\right\},
\end{align}
which follows from the fact that $d^{[k,i]}$ and $d^{[m,i]}$ are
independent for $k\neq m$. From (\ref{eq:F_d_def}) we have
\begin{align} \label{eq:RV_P1_2}
&\textrm{Pr}\left\{\left(d^{[k,i]}\right)^2\delta_1\cdot
\mathsf{SNR}\le \epsilon'\right\}  \nonumber \\ &=
1-\left(1-{\epsilon'}^{S-1}\delta_1^{-S+1}\left(
\mathsf{SNR}\right)^{-(S-1)}\right)^{N_f}.
\end{align}
Therefore, $\lim_{\mathsf{SNR}\rightarrow \infty}
\textrm{Pr}\left\{\left(d^{\max}_i\right)^2\delta_1\cdot
\mathsf{SNR}\le \epsilon'\right\}=1$  if and only if $N_f = \omega
\left( \mathsf{SNR}^{S-1}\right)$,
or equivalently (\ref{eq:nf_cond0}).
Now, if (\ref{eq:nf_cond0}) holds true, $d_{i}^{\max}$ tends to
arbitrarily small with high probability as SNR increases.
Therefore, the second term of (\ref{eq:Delta}) is dominated by the
first term, and hence $\textrm{Pr} \left\{\Delta^{[i,j]} \le
\epsilon', \forall i\in \mathcal{K}, j\in \mathcal{S}  \right\}$
in (\ref{eq:PPP}) tends to 1.

\section{Proof of Lemma \ref{lemma:N_card}} \label{app:Ns_cardinality}
Let us define the set $\Pi_s$ by
\begin{align}
& \Pi_s \triangleq  \nonumber \\ & \left\{ \mathbf{h}\in
\mathbb{C}^{S \times 1}: \frac{{\mathbf{h}}^{\mathsf{H}}
\mathbf{v}}{\| \mathbf{h}\| \|\mathbf{v}\|}  <\alpha, \forall
\mathbf{v}\in \textrm{span}\left( \mathbf{b}_{1}^{[i]}, \ldots,
\mathbf{b}_{s-1}^{[i]}\right) \right\}.
\end{align}
Since the $s$-th user pool is determined only by checking the
orthogonality to the chosen users' channel vectors, for
arbitrarily large $N$, we have the followings by the law of large
numbers: \pagebreak[0]
\begin{align}
\label{eq:N_card1} |\mathcal{N}_s| & \! \approx \! N \! \cdot \!
\textrm{Pr} \! \left\{ \mathbf{h}\in \mathbb{C}^{S \times 1}:
\frac{{\mathbf{h}}^{\mathsf{H}} {\mathbf{b}_{s}^{[i]}}'}{\|
\mathbf{h}\| \|\mathbf{b}_{s'}^{[i]}\|}  <\alpha, s'=1, \ldots,
s-1  \right\} \\ \pagebreak[0]
& \ge N \cdot \textrm{Pr}\left\{ \mathbf{h}\in \mathbb{C}^{S \times 1}: \mathbf{h} \in \Pi_s \right\} \\
&= N \cdot I_{\alpha^2}(s-1, S-s+1) \\ \pagebreak[0]
\label{eq:N_card2}&\ge N \cdot \alpha^{2(S-1)}, \pagebreak[0]
\end{align}
where $I_{\alpha^2}$ is the regularized incomplete beta function
(See \cite[Lemma 3]{T_Yoo06_JSAC}), and (\ref{eq:N_card2}) follows
from $I_{\alpha^2}(s-1, S-s+1) \ge I_{\alpha^2}(S-1, 1) =
\alpha^{2(S-1)}$.

\section{Proof of (\ref{eq:ps_conv})} \label{app:MUD}
Since $\left\|\mathbf{b}_{s}^{[i]}\right\|^2$ is a Chi-squared
random variable with $2(S-s+1)$ degrees-of-freedom, for
$\eta_D>2$, we have
\begin{align}
\textrm{Pr} \left\{ \mathsf{C}_2  \right\} 
& = 1-\frac{\gamma((S-s+1), \eta_D/2)}{\Gamma(S-s+1 )}  \\ &= \frac{\Gamma((S-s+1), \eta_D/2)}{\Gamma(S-s+1 )}\\
& = \sum_{m=0}^{S-s} e^{-(\eta_D/2)} \frac{{(\eta_D/2)}^{m}}{m!} \\
& = \frac{e^{-(\eta_D/2)}\cdot {(\eta_D/2)}^{S-s}}{(S-s)!}\left(1+ O\left({(\eta_D/2)}^{-1}\right)\right)\\
\label{eq:C2_final}&\ge \frac{e^{-(\eta_D/2)}}{(S-s)!},
\end{align}
where $\Gamma(s,x) = \int_{x}^{\infty} t^{s-1}e^{-t}dt$ is the
upper incomplete gamma function and $\gamma(s,x)=
\int_{0}^{x}t^{s-1}e^{-t}dt$ is the lower incomplete gamma
function.

Note that from the CDF of $\eta^{[i,j]}$ (See \cite[Lemma
1]{H_Yang13_TWC}), $\textrm{Pr} \left\{ \eta^{[i,j]} \le
\eta_I\right\} = c_0 \eta_I^{\tau} + o(\eta_I^{\tau})$, where
$\tau = (K-1)S-L+1$. Thus, from (\ref{eq:eta_D_choice}),
(\ref{eq:eta_I_choice}), and (\ref{eq:C2_final}),  (\ref{eq:p_s2})
can be bounded by
\begin{align} \label{eq:p_s_final}
\mathsf{p}_s & \gtrsim 1- \Bigg( 1- \left( c_0(\epsilon_I)^{\tau} {\mathsf{SNR}}^{-\tau} + \Omega\left( {\mathsf{SNR}}^{-(\tau-1)}\right)\right)\nonumber \\
& \hspace{100pt} \times
\frac{N^{-(\epsilon_D/2)}}{(S-s)!}\Bigg)^{N \cdot
\alpha^{2(S-1)}}.
\end{align}
The right-hand side of (\ref{eq:p_s_final}) converges to 1 for
increasing SNR if and only if
\begin{align}\label{eq:scaling_MUD}
\lim_{\mathsf{SNR}\rightarrow \infty} & \left(N \cdot
\alpha^{2(S-1)}\right) \cdot \left( c_0 (\epsilon_I)^{\tau}
{\mathsf{SNR}}^{-\tau} + \Omega\left(
{\mathsf{SNR}}^{-(\tau-1)}\right)\right) \nonumber \\ & \cdot
\frac{N^{-(\epsilon_D/2)}}{(S-s)!} = \infty.
\end{align}
Since the left-hand side of (\ref{eq:scaling_MUD}) can be written
by $ \tilde{c}_0\frac{N^{1-(\epsilon_D/2)}}{\mathsf{SNR}^{\tau}} +
\tilde{c}_1\frac{N^{1-(\epsilon_D/2)}}{o\left(\mathsf{SNR}^{\tau}\right)}$,
where $\tilde{c}_0$ and $\tilde{c}_1$ are positive constants
independent of SNR and $N$, it tends to infinity for increasing
SNR, and thereby  $\mathsf{p}_s$ tends to 1 if and only if $N =
\omega \left(
\mathsf{SNR}^{\frac{(K-1)S-L+1}{1-(\epsilon_D/2)}}\right)$.



\end{document}